\definecolor{links}{RGB}{11, 85, 255}
\definecolor{cites}{RGB}{0, 200, 0}
\definecolor{urls}{RGB}{255, 116, 0}
\pgfplotsset{compat=1.14}
\providecommand{\DontPrintSemicolon}{\dontprintsemicolon}
\newcommand*{\algotitle}[2]{%
	\stepcounter{algocf}%
	\hypertarget{algocf.title.\theHalgocf}{}%
	\NR@gettitle{#1}%
	\label{#2}%
	\addtocounter{algocf}{-1}%
}
\newtheorem{theorem}{Theorem}
\newtheorem{lemma}{Lemma}
\theoremstyle{definition}
\newtheorem{definition}{Definition}
\newtheorem{remark}{Remark}
\newcommand{\mysetminusD}{\hbox{\tikz{\draw[line width=0.6pt,line cap=round] 
(3pt,0) -- (0,6pt);}}}
\newcommand{\mysetminusT}{\mysetminusD}
\newcommand{\mysetminusS}{\hbox{\tikz{\draw[line width=0.45pt,line 
cap=round] (2pt,0) -- (0,4pt);}}}
\newcommand{\mysetminusSS}{\hbox{\tikz{\draw[line width=0.4pt,line 
cap=round] (1.5pt,0) -- (0,3pt);}}}
\newcommand*\samethanks[1][\value{footnote}]{\footnotemark[#1]}
\newcommand{\mysetminus}{\mathbin{\mathchoice{\mysetminusD}{\mysetminusT}{\mysetminusS}{\mysetminusSS}}}
\DeclareMathOperator*{\argmax}{arg\,max}
\newcommand{\E}[1]{\mathbb E \left[ #1 \right]}
\newcommand{\Ei}{\mathcal{E}_i}
\newcommand{\Ea}{\mathcal{E}_a}
\newcommand{\p}[1]{\mathbb P \left( #1 \right)}
\author{
	Georgios Amanatidis\thanks{Department of Mathematical Sciences, University of Essex, UK, and Institute for Logic, Language and Computation, University of Amsterdam, The Netherlands. Email:   
\texttt{\href{mailto:georgios.amanatidis@essex.ac.uk}{georgios.amanatidis}@essex.ac.uk}}
	\and
	Federico Fusco\thanks{Department of Computer, Control, and Management 
Engineering ``Antonio Ruberti'', Sapienza University of Rome, 
Italy. Email:   
\texttt{\{\href{mailto:fuscof@diag.uniroma1.it}{fuscof},
        \href{mailto:lazos@diag.uniroma1.it}{lazos},
		\href{mailto:leonardi@diag.uniroma1.it}{leonardi},
		\href{mailto:rebeccar@diag.uniroma1.it}{rebeccar}\}@diag.uni\allowbreak 
		roma1.it}
}
\and
Philip Lazos\samethanks
\and
Stefano Leonardi\samethanks
\and
Rebecca Reiffenh\"auser\samethanks
}
\title{Fast Adaptive Non-Monotone Submodular Maximization \\ Subject to a Knapsack Constraint\thanks{
This work was supported by the ERC Advanced 
Grant 788893 AMDROMA ``Algorithmic and Mechanism Design Research in 
Online Markets'' and the MIUR PRIN project ALGADIMAR ``Algorithms, Games, 
and Digital Markets.''}}
\date{\today}
\begin{document}

\maketitle
\begin{abstract}
Constrained submodular maximization problems encompass a wide variety of 
applications, including personalized recommendation, team formation, and 
revenue maximization via viral marketing. The massive instances occurring in 
modern day applications can render existing algorithms prohibitively slow, while 
frequently, those instances are also inherently stochastic. Focusing on these 
challenges, we revisit the classic problem of maximizing a (possibly 
non-monotone) submodular function subject to a knapsack constraint. We 
present a simple randomized greedy algorithm that achieves a $5.83$ 
approximation and runs in $O(n \log n)$ time, i.e., at least a factor $n$ faster than 
other state-of-the-art algorithms. The robustness of our approach allows us to 
further transfer it  to a stochastic version of the problem. There, we obtain a 
9-approximation to the best adaptive policy, which is the first constant 
approximation for non-monotone objectives. Experimental evaluation of our 
algorithms showcases their improved performance on real and synthetic data.
\end{abstract}

\section{Introduction}
Constrained submodular maximization is a fundamental problem at the heart of 
discrete optimization.
The reason for this is as simple as it is clear: submodular functions capture the 
notion of \emph{diminishing returns} present in a wide variety of real-world 
settings.

Consequently to its striking importance and coinciding NP-hardness 
\citep{Feige98}, extensive research has been conducted on submodular 
maximization since the seventies (e.g., \citep{Edmonds71,NemhauserWF78}), 
with focus lately shifting towards handling the massive datasets emerging in 
modern applications. With a wide variety of possible constraints, often regarding 
cardinality, independence in a matroid, or knapsack-type restrictions, the 
number of applications is vast. To name just a few, there are recent works on 
feature selection in machine learning \citep{DK08, DK18, 
Khanna2017ScalableGF}, influence  maximization in viral marketing 
\citep{BabaeiMJS13,KempeKT15}, and data summarization 
\citep{SiposSSJ12,MirzasoleimanKSK13,TschiatschekIWB14}. Many of these 
applications have \emph{non-monotone} submodular objectives, meaning that 
adding an element to an existing set might actually decrease its value. Two such 
examples are discussed in detail in \cref{sec:experiments}.

Modern-day applications increasingly force us to face two distinct, but often 
entangled challenges. 
First, the massive size of occurring instances fuels a need for very fast algorithms. As the running time is dominated by the \emph{objective 
function evaluations} (also known as \emph{value oracle calls}), it is 
typically measured (as in this work) by their number. So, here the goal is to 
design algorithms requiring an almost linear number of such evaluations.
There is extensive research focusing on this issue, be it in the standard 
algorithmic setting \citep{MirzasoleimanBK16}, or in streaming 
\citep{ChekuriGQ15,BadanidiyuruMKK14} and distributed submodular 
maximization \citep{MirzasoleimanKSK13,BarbosaENW15}. 
The second challenge is the inherent uncertainty in  problems like sensor 
placement or revenue maximization, where one does not learn  the exact 
marginal value of an element until it is added to the solution (and thus ``paid 
for''). This, too, has motivated several works on adaptive submodular 
maximization \citep{GolovinK11,GotovosKK15,GuptaNS17,Mitrovic0F0K19}. 
Note that even estimating the expected value to a partially unknown objective 
function can be very costly and this makes the reduction of the number of such
calls all the more  important.

Knapsack constraints are one of the most natural types of restriction that occurs in  real-world problems and  are often \emph{hard} budget, time, or size constraints. Other combinatorial constraints like partition matroid constraints, on the other hand, model less stringent requirements, e.g., avoiding too many similar items in the solution. As the soft versions of such constraints can be often hardwired in the objective itself (see the \emph{Video Recommendation} application in \cref{sec:experiments}), we do not deal with them directly here.

The nearly-linear time requirement, without large constants involved, leaves little room for using sophisticated approaches like continuous greedy methods 
\citep{FeldmanNS11} or enumeration of initial solutions \citep{Sviridenko04}. To further highlight the delicate balance between function evaluations and 
approximation, it is worth mentioning that, even for the monotone case, the first result combining $O(n\log n)$ oracle calls with an approximation better than $2$ is the very recent $\frac{e}{e-1}$-approximation algorithm of 
\citet{EneN19icalp_a}. While this is a very elegant theoretical result, the huge constants involved render it unusable in practice.

At the same time, the strikingly simple, $2$-approximation \emph{modified 
densitiy greedy} algorithm of \citet{Wolsey82} deals well with both issues in the \emph{monotone} case:
\emph{Sort the items in decreasing order according to their marginal value over cost ratio and pick as many items as possible in that order without violating the constraint. Finally, return the best among this solution and the best single item.} 
When combined with lazy evaluations \citep{Minoux78}, this algorithm requires only $O(n\log n)$ value oracle calls and can be adjusted to work equally well for adaptive submodular maximization \citep{GolovinK11}. 
For \emph{non-monotone} objectives, however, the only practical algorithm is 
the $(10+\varepsilon)$-approximation FANTOM algorithm of 
\citet{MirzasoleimanBK16} requiring $O(n^2 \log n)$ value oracle calls. 
Moreover,  there is no known algorithm for the adaptive setting that can handle anything beyond a cardinality constraint \citep{GotovosKK15}.

We aim to tackle both aforementioned challenges for non-monotone submodular 
maximization under a knapsack constraint, by revisiting the simple algorithmic principle of Wolsey's density greedy algorithm. Our approach is along the lines of recent results on  \emph{random greedy} combinatorial algorithms 
\citep{BuchbinderFNS14,FeldmanHK17}, which show that introducing 
randomness into greedy algorithms can extend their guarantees to the non-monotone case. Here, we give the first such algorithm for a knapsack constraint.

\subsection{Contribution and Outline}

The density greedy algorithm mentioned above may produce arbitrarily poor 
solutions when the objective is non-monotone.
In this work we show that introducing some randomization leads to  a simple 
algorithm, \textsc{SampleGreedy}, that outperforms existing algorithms both in theory and in practice. \textsc{SampleGreedy}  flips a coin before greedily 
choosing any item in order to decide whether to include it to the solution or 
ignore it. The algorithmic simplicity of such an approach keeps \textsc{SampleGreedy} fast, easy to implement, and flexible enough to adjust to other related settings. At the same time the added randomness prevents it from getting trapped in solutions of poor quality. 

In particular, in \cref{sec:vanilla_greedy} we show that 
\textsc{SampleGreedy} is a $5.83$-approximation algorithm using only $O(n \log n)$ value oracle calls.
When all singletons have small value compared to an optimal solution, the 
approximation factor improves to almost $4$.  This is the first constant-factor approximation algorithm for the non-monotone case using this few queries. The only other algorithm fast enough to be suitable for large instances is the aforementioned FANTOM  \citep{MirzasoleimanBK16} which, for a knapsack constraint,\footnote{FANTOM can handle more general constraints, like a $p$-system constraint and $\ell$ knapsack constraints. Here we refer to its performance and running time when restricted to a single knapsack constraint.} achieves an approximation factor of $(10+\varepsilon)$ with  $O(n r \varepsilon^{-1}\log n)$ queries, where $r$ is the size of the largest 
feasible set and can be as large as $\Theta(n)$. Even if we modify FANTOM to 
use lazy evaluations, we still improve the query complexity by a logarithmic 
factor (see also \cref{rem:lazyfantom}).

For the adaptive setting, where the stochastic submodular objective  is learned as we build the solution, we show in \cref{sec:adaptive_greedy} that a
variant of our algorithm, \textsc{AdaptiveGreedy}, still guarantees a 
$9$-approximation to the \emph{best adaptive policy}. This is not only a 
relatively small loss given the considerably stronger benchmark, but is in fact the first constant approximation known for the problem in the 
adaptive submodular maximization framework of \citet{GolovinK11} and 
\citet{GotovosKK15}. 
Hence we fill a notable theoretical gap, given that models with incomplete prior information or those capturing evolving settings are becoming increasingly important in practice.

From a technical point of view, our algorithm combines the simple principle of
always choosing a high-density item with maintaining a careful 
exploration-exploitation balance, as is the case in many stochastic learning 
problems. It is therefore directly related to the recent simple randomized greedy approaches for maximizing non-monotone submodular objectives subject to
other (i.e., non-knapsack) constraints \citep{BuchbinderFNS14,ChekuriGQ15,FeldmanHK17}. However, there are 
underlying technical difficulties that make the analysis for knapsack constraints significantly more challenging. Every single result in this line of work critically depends on making a random choice in each step, in a way so that ``good progress'' is consistently made. This is not possible under a knapsack constraint. Instead, we argue globally about the value of the 
\textsc{SampleGreedy}  output via a  comparison with a carefully maintained 
\emph{almost integral} solution. When it comes to extending this approach to 
the adaptive non-monotone submodular maximization framework, we crucially 
use the fact that the algorithm builds  the solution iteratively, committing in every step to all the past choices. This is the main technical reason why it is not possible to adjust algorithms with multiple ``parallel'' runs, like FANTOM, to the adaptive setting.

Our algorithms provably handle well the aforementioned emerging, modern-day 
challenges, i.e., stochastically evolving objectives and rapidly growing 
real-world instances. In \cref{sec:experiments} we showcase the fact that 
our theoretical results indeed translate into applied performance. We focus on 
two applications that fit within the framework of non-monotone submodular 
maximization subject to a knapsack constraint, namely \emph{video 
recommendation} and \emph{influence-and-exploit marketing}. We run 
experiments on real and synthetic data that indicate that  \textsc{SampleGreedy} consistently performs better than FANTOM while being much faster. For \textsc{AdaptiveGreedy} we highlight the fact that its adaptive behavior results in a significant improvement over non-adaptive alternatives.

\subsection{Related Work}
There is an extensive literature on submodular maximization subject to knapsack or other constraints, going back several decades, see, e.g., 
\citep{NemhauserWF78,Wolsey82}.
For a \textit{monotone} submodular objective subject to a knapsack constraint 
there is a deterministic  $\frac{e}{e-1}$-approximation algorithm 
\citep{khuller1999budgeted,Sviridenko04} which is tight, unless 
$\text{P}=\text{NP}$ \citep{Feige98}. 

On non-monotone submodular functions \citet{LeeMNS10} provided a 
$5$-approximation algorithm for $k$ knapsack constraints, which was the first 
constant factor algorithm for the problem.  \citet{FadaeiFS11} building on the 
approach of \citet{LeeMNS10}, reduced this factor to $4$.
One of the most interesting algorithms for a single knapsack constraint is the 
$6$-approximation algorithm of \citet{GuptaRST10}. As this is a greedy 
combinatorial algorithm based on running Sviridenko's algorithm twice, it is 
often used as a subroutine by other algorithms in the literature, e.g., 
\citep{BarbosaENW15}, despite its running time of $O(n^4)$. A number of 
continuous greedy approaches \citep{FeldmanNS11,KulikST13,ChekuriVZ14} led 
to the current best factor of $e$ when a knapsack---or even a general 
downwards closed---constraint is involved. However, continuous greedy 
algorithms are impractical for most real-world applications. The fastest such 
algorithm for our setting is the $(e+\varepsilon)$-approximation algorithm of 
\citet{ChekuriJV15} requiring $O(n^3\varepsilon^{-4}\mathrm{polylog}(n))$ 
function evaluations. Possibly the only algorithm that is directly comparable to 
our \textsc{SampleGreedy} in terms of running time is FANTOM by 
\citet{MirzasoleimanBK16}. FANTOM achieves a 
$(1+\varepsilon)(p+1)(2p+2\ell+1) / p$-approximation for $\ell$ knapsack 
constraints and a $p$-system constraint in time $O(nrp\varepsilon^{-1}\log(n))$, 
where $r$ is the size of the largest feasible solution.

As mentioned above, there is a number of recent results on randomizing simple 
greedy algorithms so that they work for non-monotone submodular objectives 
\citep{BuchbinderFNS14,ChekuriGQ15,GotovosKK15,FeldmanHK17,FeldmanZ18}.
 Our paper extends this line of work, as we are the first to successfully apply 
this approach for a knapsack constraint. 

\citet{GolovinK11} introduced the notions of adaptive monotonicity and 
submodularity and showed it is possible to achieve guarantees with respect to 
the optimal adaptive policy that are similar to the guarantees one gets in the 
standard algorithmic setting with respect to an optimal solution. 
Our \cref{sec:adaptive_greedy} fits into this framework as it was 
generalized by \citet{GotovosKK15} for non-monotone objectives; they showed that a variant of the random greedy algorithm of 
\citet{BuchbinderFNS14}  achieves a $\frac{e}{e-1}$-approximation in the case of 
a cardinality constraint.

Implicitly related to our quest for few value oracle calls is the recent line of work 
on the adaptive complexity of submodular maximization that measures the 
number of sequential rounds of independent value oracle calls needed to obtain 
a constant factor approximation; see 
\citep{BalkanskiS18,BalkanskiRS19,FahrbachMZ19a,FahrbachMZ19b} and 
references therein.
To the best of our knowledge, nothing nontrivial is known for non-monotone 
functions and  a knapsack constraint.

\section{Preliminaries}
In this section we formally introduce the problem of submodular maximization 
with a knapsack constraint in both the standard and the adaptive setting.

Let  $A = \{1, 2, ..., n\}$ be a set of $n$ items.
\begin{definition}[Submodularity]\label{def:SM}
	A  function $v: 2^A \rightarrow \mathbb{R}$ is 
	\emph{submodular} if and only if 
	$$v(S\cup \{i\}) - v(S) \geq v(T\cup \{i\}) - v(T)$$ for all $S\subseteq T 
	\subseteq 
	A$ and 
	$i\not\in T$.
\end{definition}
A  function $v : 2^A \rightarrow \mathbb{R}$ is \emph{non-decreasing} (often 
referred to as 
\emph{monotone}), if $v(S) \le v(T)$ for any $S \subseteq T \subseteq A$. We 
consider general (i.e., not necessarily monotone), normalized (i.e., 
$v(\emptyset)=0$), non-negative submodular valuation functions.
Since \emph{marginal} values are extensively used, we adopt the shortcut 
$v(T\,|\,S)$ for 
the marginal value of set $T$ with respect to a set $S$, i.e. $v(T\,|\,S)=v(T \cup 
S) - v(S).$ If $T = \{i\}$ we write simply $v(i\,|\,S)$.  While this is the most 
standard definition of submodularity in this setting, there 
are alternative equivalent definitions that will be useful later on.
\begin{theorem}[\citet{NemhauserWF78}]\label{thm:SM}
	Given a  function $v:2^A \rightarrow \mathbb{R}$, the 
	following are equivalent:
	\begin{enumerate}
		\item $v(i\,|\,S) \geq v(i\,|\,T)$ for all $S\subseteq T \subseteq A$ and 
		$i\not\in T$. \label{def:sub1}
		\item $v(S) + v(T) \ge v(S\cup T) + v(S\cap T)$ for all $S, T \subseteq A$. 
		\label{def:sub2}
		\item $v(T) \le v(S) + \sum_{i \in T \setminus S} v(i\,|\,S) - \sum_{i \in S 
		\setminus T} v(i\,|\,S\cup T \setminus \{i\})$ for all $S, T \subseteq 
		A$.\label{def:sub3} 
	\end{enumerate}
\end{theorem}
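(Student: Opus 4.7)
The plan is to prove all three statements equivalent via the chain (1)$\Leftrightarrow$(2), (1)$\Rightarrow$(3), and (3)$\Rightarrow$(1).

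For (1)$\Rightarrow$(2), I would fix $S, T \subseteq A$, enumerate $T \setminus S = \{a_1, \ldots, a_k\}$, and telescope both sides: write $v(S \cup T) - v(S) = \sum_r v(a_r \,|\, S \cup \{a_1, \ldots, a_{r-1}\})$ and $v(T) - v(S \cap T) = \sum_r v(a_r \,|\, (S \cap T) \cup \{a_1, \ldots, a_{r-1}\})$. Since the intermediate set in the second sum is contained in that of the first and $a_r$ lies outside both, (1) applied term by term yields $v(T) - v(S \cap T) \ge v(S \cup T) - v(S)$, which rearranges to (2). The reverse (2)$\Rightarrow$(1) is immediate: for $S \subseteq T$ with $i \notin T$, apply (2) to the pair $(S \cup \{i\}, T)$, whose intersection is $S$ and whose union is $T \cup \{i\}$; rearranging gives $v(i \,|\, S) \geq v(i \,|\, T)$.

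For (1)$\Rightarrow$(3), I would decompose $v(T) - v(S) = [v(S \cup T) - v(S)] - [v(S \cup T) - v(T)]$ and bound each bracket by telescoping. Writing the first bracket as a sum along an enumeration of $T \setminus S$, each marginal has the form $v(j \,|\, S \cup P)$ with $P \subseteq T \setminus S$, which (1) bounds above by $v(j \,|\, S)$; this gives $v(S \cup T) - v(S) \leq \sum_{j \in T \setminus S} v(j\,|\, S)$. For the second bracket, I would enumerate $S \setminus T = \{b_1, \ldots, b_m\}$ and write $v(S \cup T) - v(T) = \sum_\ell v(b_\ell \,|\, T \cup \{b_1, \ldots, b_{\ell-1}\})$; each intermediate set is a subset of $S \cup T \setminus \{b_\ell\}$, so (1) gives a lower bound of $\sum_{j \in S \setminus T} v(j \,|\, S \cup T \setminus \{j\})$. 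Subtracting the two bounds produces the inequality in (3).

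The hard part will be (3)$\Rightarrow$(1): (3) is a ``global'' bound while (1) compares a single pair of marginals, so a naive substitution of a generic pair $(S, T)$ into (3) does not isolate the required inequality. The key idea is to specialize (3) to a very small instance. Concretely, for any set $S$ and distinct $i, j \notin S$, applying (3) to the pair $(S, S \cup \{i, j\})$ makes the negative sum empty and yields $v(S \cup \{i, j\}) \leq v(S) + v(i\,|\,S) + v(j\,|\,S)$; rearranging via $v(S \cup \{i,j\}) = v(S \cup \{j\}) + v(i\,|\, S \cup \{j\})$ gives the pairwise diminishing returns $v(i \,|\, S \cup \{j\}) \leq v(i \,|\, S)$. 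I would then extend to arbitrary $S \subseteq T$ with $i \notin T$ by induction on $k = |T \setminus S|$: picking any $j \in T \setminus S$, the inductive hypothesis applied to $(S, T \setminus \{j\})$ gives $v(i\,|\,S) \geq v(i\,|\,T \setminus \{j\})$, and the pairwise base case applied at $T \setminus \{j\}$ with element $j$ gives $v(i\,|\,T \setminus \{j\}) \geq v(i\,|\,T)$, completing the chain.
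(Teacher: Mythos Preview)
Your proof is correct. Note, however, that the paper does not actually prove this theorem: it is stated with attribution to \citet{NemhauserWF78} and used as a black box, so there is no ``paper's proof'' to compare against. The argument you give is essentially the standard one from that reference---telescoping along enumerations of $T\setminus S$ and $S\setminus T$ to move between (1), (2), and (3), and recovering (1) from (3) by first specializing (3) to pairs $(S,\,S\cup\{i,j\})$ to obtain the single-element diminishing-returns inequality and then extending by induction on $|T\setminus S|$.
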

Moreover, we restate a key result which connects random sampling and 
submodular maximization. The original version of the theorem was due to 
\citet{FeigeMV11}, although here we use a variant from \citet{BuchbinderFNS14}. 
\begin{lemma}[Lemma 2.2. of \citet{BuchbinderFNS14}]
	\label{lem:sampling}
	Let $f: 2^A \to \mathbb{R}$ be a submodular set function, let $ X \subseteq 
	A$ and let $X(p)$ be a sampled subset, where each element of $X$ appears 
	with probability at most $p$ (not necessarily independent). Then:
	$$
	\mathbb{E} \left[ f(X(p)) \right] \geq (1-p)f(\emptyset).
	$$
\end{lemma}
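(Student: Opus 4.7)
The plan is to reduce the distribution of $X(p)$ to one supported on a chain of subsets of $X$ via a submodular exchange argument, and then conclude by a short computation on the chain.

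\emph{Step 1: Chain reduction.} I would first show that any distribution $\pi$ on $2^X$ with marginals $y_i \leq p$ can be replaced by a chain-supported distribution $\pi'$ with identical marginals and $\mathbb{E}_{\pi'}[f(Y)] \leq \mathbb{E}_\pi[f(Y)]$. If the support of $\pi$ contains two incomparable sets $S$ and $T$, moving mass $\delta > 0$ from $\{S, T\}$ to $\{S \cup T, S \cap T\}$ preserves every marginal, since the identity $\mathbf{1}[i \in S] + \mathbf{1}[i \in T] = \mathbf{1}[i \in S \cup T] + \mathbf{1}[i \in S \cap T]$ holds for every element $i$; moreover the exchange inequality $f(S)+f(T) \geq f(S \cup T)+f(S \cap T)$ from part 2 of \cref{thm:SM} guarantees that the expectation does not increase. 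Choosing $\delta$ maximally at each step, the process terminates because the potential $\Phi(\pi) = \sum_S \pi(S) |S|^2$ strictly increases by $2\delta|S \setminus T||T \setminus S| > 0$ per incomparable swap yet is bounded above.

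\emph{Step 2: Computation on the chain.} Write the resulting chain as $S_1 \subsetneq S_2 \subsetneq \cdots \subsetneq S_k$ with $q_j = \pi'(S_j)$. The case $p = 1$ is immediate from $f \geq 0$, so assume $p < 1$. Then $S_1 = \emptyset$, for otherwise any element of $S_1$ would have marginal probability $1 > p$. For any $i \in S_2$ the marginal equals $\sum_{j \geq 2} q_j = 1 - q_1$, which is at most $p$ by hypothesis, so $q_1 \geq 1-p$. Non-negativity of $f$ now yields
$$\mathbb{E}[f(X(p))] \;\geq\; \sum_{j=1}^{k} q_j f(S_j) \;\geq\; q_1 f(\emptyset) \;\geq\; (1-p) f(\emptyset).$$

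The main technical obstacle is Step 1: one must verify that the swap process terminates in finitely many steps. The potential argument above handles this cleanly; an alternative is to work inside the finite-dimensional polytope of distributions with fixed marginals and invoke the fact that its extreme points are chain-supported. Once the reduction is in place, everything else is a direct computation that relies only on the marginal hypothesis, non-negativity of $f$, and the exchange form of submodularity.
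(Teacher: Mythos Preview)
The paper does not supply its own proof of this lemma; it is cited from \citet{BuchbinderFNS14} without argument. Your overall strategy---reduce to a chain-supported distribution with the same marginals and then read off the bound---is the standard Lov\'asz-extension route and is correct in spirit, but two of the supporting claims are not right as written.

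First, ``strictly increases yet is bounded above'' does not by itself yield finite termination of the swap process, since the increments $2\delta\,|S\setminus T|\,|T\setminus S|$ carry no positive lower bound. Second, and more substantively, your fallback---that the extreme points of the fixed-marginal polytope are chain-supported---is false: for $X=\{1,2\}$ and marginals $y_1=y_2=\tfrac12$ the point $\pi(\{1\})=\pi(\{2\})=\tfrac12$ is a vertex (the polytope is a single segment) whose support $\{\{1\},\{2\}\}$ is an antichain. What \emph{is} true, and enough, is that some minimiser of $\pi\mapsto\mathbb{E}_\pi[f]$ over this polytope is chain-supported: maximise your potential $\Phi$ over the optimal face and note that any incomparable pair in the support would allow a further strict increase of $\Phi$ while staying on that face, a contradiction. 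A still cleaner fix bypasses the reduction entirely: sort so that $y_1\ge\cdots\ge y_n$ and use submodularity in the form $f(S)\ge f(\emptyset)+\sum_{i\in S}f\bigl(i\mid\{1,\dots,i-1\}\bigr)$; taking expectations and Abel-summing gives
\[
\mathbb{E}[f(X(p))]\ \ge\ (1-y_1)\,f(\emptyset)+\sum_{j\ge1}(y_j-y_{j+1})\,f(\{1,\dots,j\})\ \ge\ (1-p)\,f(\emptyset),
\]
where the last step uses $y_1\le p$ and non-negativity of $f$. (Non-negativity is also used in your Step~2; it is assumed throughout the paper but not stated in the lemma.)
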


We assume access to a \emph{value oracle} 
that 
returns $v(S)$ when given as input a set $S$.
We also associate a positive cost $c_i$ with each element $i \in A$ 
and consider a given 
budget $B$. The goal is to find a subset of $A$ of maximum value among the 
subsets whose total cost is at most $B$.
Formally, we want some $S^* \in \argmax\{v(S)\,|\, S \subseteq A, \, \sum_{i \in 
S} c_i \leq B\}$. 
Without loss of generality, we may assume that $c_i\le B$ for all $i\in A$, since 
any element with cost exceeding $B$ is not contained in any feasible solution 
and can be discarded.

We next present the adaptive optimization framework. 
On a high level, here we do know how the world works and what situations 
occur with which probability. However, which of those we will be actually 
dealing with is inferred over time by the bits of information we learn.
Along with set $A$, we introduce the \emph{state space} $\Omega$ which is 
endowed with some probability measure. By $\omega = (\omega_i)_{i \in A} \in 
\Omega$ we  specify the \textit{state} of each element in $A$. The adaptive 
valuation function $v$ is then defined over $A \times \Omega$; the value over a 
subset $S \subseteq A$ depends on both the subset and $\omega$. 
Due to the probability measure over $\Omega$, $v(S,{\omega})$ is a random 
variable. We define $v(S)= \E{v(S,{\omega})}$, the expectation being with respect 
to ${\omega}$. Like before, the costs $c_i$ are deterministic and known in 
advance.

For each $\omega \in \Omega$ and $S \subseteq A$, we define the partial 
realization of state $\omega$ on $S$ as the couple $(S, \omega_{|S})$, where 
$\omega_{|S}= (\omega_i)_{i \in S}.$ It is natural to assume that the true value of 
a set $S$ does not depend on the whole state, but only on $\omega_{|S}$, i.e., 
$v(S, \omega)= v(S, \psi),$ for all $\omega, \psi \in \Omega$ such that 
$\omega_{|S}=\psi_{|S}$. Therefore, sometimes we overload the notation and 
use $v(S,\omega_{|S})$ instead of $v(S,\omega)$.
There is a clear partial ordering on the set of all possible partial realizations: 
$(S,\omega_{|S}) \subseteq (T,\omega_{|T})$ if $S \subseteq T$ and 
$\omega_{|T}$ coincides with $\omega_{|S}$ over all the elements of $S.$
The marginal value of an element $i$ given a partial realization $(S,\omega_{|S})$
is 
\[
v(i\,|\,(S,\omega_{|S})) = \E{v(\{i\} \cup S, {\omega}) - v(S, 
{\omega})\,|\,{\omega}_{|S}}.
\]
We are now ready to introduce the concepts of adaptive submodularity and 
monotonicity.
\begin{definition}
	The function $v(\cdot\,,\cdot)$ is 
	\emph{adaptive submodular} if $v(i\,|\,(S,\omega_{|S})) \geq 
	v(i|(T,\omega_{|T}))$ for all partial realizations $(S, \omega_{|S}) \subseteq (T, 
	\omega_{|T})$ and  for any $i \notin T$.
	Further, $v(\cdot\,,\cdot)$ is \emph{adaptive monotone} if 
	$v(i\,|\,(S,\omega_{|S})) \geq 0$ for all partial realizations $(S, \omega_{|S})$ 
	and for all $i \notin S$.
\end{definition}

In \cref{sec:adaptive_greedy} we assume access to a value oracle that 
given an element $i$ and a partial realization returns the expected marginal value 
of $i$.
Using the properties of conditional expectation, it is straightforward to show that 
if $v(\cdot , \cdot)$ is adaptive submodular, then its expected value $v(\cdot)$ 
is submodular. In analogy with \citep{GotovosKK15}, we assume $v$ to be 
state-wise submodular, i.e., $v(\cdot, \omega)$ is a submodular set function for 
each $\omega \in \Omega.$

In this framework it is possible to define \textit{adaptive policies} to maximize 
$v$. An adaptive policy is a function
which associates with every partial realization a distribution on the next element 
to be added to the solution. 
The optimal solution to the adaptive submodular maximization problem is to find 
an adaptive policy that maximizes the expected value while respecting the 
knapsack constraint (the expectation being taken over $\Omega$ and the 
randomness of the policy itself).

\section{The Algorithmic Idea} 
\label{sec:vanilla_greedy}

We present and analyze \textsc{SampleGreedy}, a randomized 
$5.83$-approximation algorithm for maximizing a submodular function subject to 
a knapsack constraint. As we mentioned already, \textsc{SampleGreedy} is 
based on the modified density greedy algorithm of \citet{Wolsey82}. Since the 
latter may perform arbitrarily bad for non-monotone objectives, we add a 
sampling phase, similar to the sampling phase of the Sample Greedy algorithm 
of \citet{FeldmanHK17}.

\textsc{SampleGreedy} first selects a random subset $A'$ of $A$ by 
independently 
picking each element with probability $p$. Then it runs Wolsey's algorithm only 
on $A'$. To formalize this second step, using $v(i)$ as a shorthand for $v(\{i\})$, 
let $$j_{1} \in \argmax_{i\in A'} v(i) / c_i,$$ while, for $k\ge 1$, $$j_{k+1} \in \argmax_{i\in A'\mysetminus\{j_1,\ldots,j_k\}}  v(i\,|\,\{j_1, \ldots, 
j_k\}) 
/ c_i.$$ 
If $\ell$ is the largest integer such that $\sum_{i=1}^{\ell} c_{j_i} \le B$, then 
$S=\{j_1,\ldots,j_{\ell}\}$. In the end, the output is the one yielding the largest 
value between $S$ and an element from $\argmax_{i\in A'} v(i)$.

We formally present this algorithm in pseudocode below. Notice that to simplify 
the  analysis, instead of selecting the entire set $A'$ immediately at the start of 
the 
algorithm, we defer this decision and toss a coin with probability of success $p$ 
each time an item is considered to be added to the solution. Both versions of 
the algorithm behave identically.
\begin{algorithm}[ht!]
	\DontPrintSemicolon 
	\NoCaptionOfAlgo
	\algotitle{\textnormal{\textsc{SampleGreedy}}}{fig:SampleGreedy.title}
	$i^* \in \argmax_{k\in A} v\left( k \right) $ \tcc*{{\scriptsize best single item}}
	$S= \emptyset$ \tcc*{{\scriptsize greedy solution}} \label{line:S_initial}
	$F=\left\{k\in A\,|\,v(k\,|\,S)>0 \text{ and } c_k\le B\right\}$  \tcc*{{\scriptsize 
	initial set of feasible items}}
	$R = B$ \tcc*{{\scriptsize remaining knapsack capacity}} \label{line:R_budget}
	\While{$F\neq \emptyset$}{
		Let $i \in \argmax_{k\in F} \displaystyle\frac{v(k\,|\,S)}{c_k}$ 
		\label{line:argmax}\;
		Let $r_i\sim \mathrm{Bernoulli}(p)$ \tcc*{{\scriptsize independent random 
		bit}} 	
		\If{$r_i = 1$}{
			$S = S \cup \{i\}$ \;
			$R = R - c_i$\;
		}
		$A = A\setminus\{i\}$ \;
		$F=\left\{k\in A\,|\,v(k\,|\,S)>0 \text{ and } c_k\le R\right\}$ 
		\label{line:F_update}
	}
	\Return $\max\{v(i^*), v(S)\}$
	\caption{\textsc{SampleGreedy}$(A, v, \mathbf{c}, B)$} 
	\label{fig:SampleGreedy} 
\end{algorithm}

\begin{theorem}
	\label{thm:greedy}
	For $p=\sqrt{2} - 1$, \textsc{SampleGreedy} is a  $\left(3+2\sqrt{2} 
	\right)$-approximation algorithm. 
\end{theorem}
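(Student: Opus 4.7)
The plan is to establish the core inequality
\[
\mathbb{E}[v(S)] \;\ge\; p(1-p)\,v(O) \;-\; p\,v(i^*),
\]
where $O$ is an optimal solution, $S$ is the random greedy set, and $i^*$ is the best single item. Combined with the trivial bounds $\mathbb{E}[\max\{v(i^*),v(S)\}] \ge \mathbb{E}[v(S)]$ and $\mathbb{E}[\max\{v(i^*),v(S)\}] \ge v(i^*)$, multiplying the second by $p$ and adding gives $(1+p)\,\mathbb{E}[\max\{v(i^*),v(S)\}] \ge p(1-p)\,v(O)$, so the approximation ratio is at most $(1+p)/(p(1-p))$. This ratio is minimized on $(0,1)$ at the positive root of $p^{2}+2p-1=0$, namely $p=\sqrt{2}-1$, where it equals $3+2\sqrt{2}$, matching the statement.

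For the core inequality, I would start from Theorem \ref{thm:SM}(\ref{def:sub3}) applied to $(S, O)$,
\[
v(O)\;\le\; v(S) + \sum_{i\in O\setminus S} v(i\,|\,S) - \sum_{i\in S\setminus O} v(i\,|\,S\cup O \setminus\{i\}),
\]
which reduces the task to upper bounding $\mathbb{E}\bigl[\sum_{i\in O\setminus S} v(i\,|\,S)\bigr]$. I would split $O\setminus S$ into (i) items that the greedy loop actually considered but whose Bernoulli$(p)$ coin came up $0$, and (ii) items never considered, because either the marginal turned non-positive or the cost exceeded the residual budget $R_{t}$ at the moment they would have been picked. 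For items of group (i), the density-greedy rule in Line \ref{line:argmax} guarantees that $v(o\,|\,S_{t})/c_{o}\ge v(j\,|\,S_{t})/c_{j}$ for every later greedy candidate $j$, which lets one charge $v(o\,|\,S)$ to the per-unit-cost greedy gain at step $t$; summing telescopes to a quantity comparable with $v(S)$ up to the knapsack capacity. Passing from the conditional statement to the expectation over the coin flips contributes the $(1-p)$ factor in front of $v(O)$, which I would formalize through Lemma \ref{lem:sampling} with $f(T)=v(T\cup S)$ on $T\subseteq O$.

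The genuinely delicate part is group (ii): budget exhaustion breaks the clean ``rejected only by coin flip'' picture and prevents the per-step progress arguments used in the cardinality-constraint analyses of \citet{BuchbinderFNS14,FeldmanHK17}. Following the hint in the introduction, I would maintain, in parallel with the greedy, an \emph{almost integral} companion solution that matches $S$ on all committed items but carries a fractional weight on the first greedy candidate $j_{t+1}$ that overflows the residual budget. Submodularity upper-bounds the value of this companion by $v(S)+v(j_{t+1})\le v(S)+v(i^*)$, which is exactly the origin of the $-p\,v(i^*)$ correction; without it, the bound would be uncontrollable when one large optimal item causes the overflow. The hardest remaining step is decoupling the two sources of randomness — the greedy trajectory and the coin flips — which are entangled through the dynamically updated feasible set $F$: I expect to handle this by reinterpreting SampleGreedy as density greedy on the random restriction $A' = \{i : r_{i}=1\}$, so that a ``would-have-been'' greedy schedule on $A$ can be compared against $O$ in a single global inequality. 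Plugging the resulting bounds on groups (i) and (ii) back into the submodular decomposition and taking expectations yields the core inequality, which together with the optimization of $p$ completes the proof.
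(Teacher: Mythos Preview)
Your target inequality $\mathbb{E}[v(S)]\ge p(1-p)\,v(S^*)-p\,v(i^*)$ and the optimization over $p$ are exactly what the paper proves, and you have correctly identified the three ingredients: the sampling lemma for the $(1-p)$ factor, a density-charging argument, and a single fractional ``overflow'' item absorbing the $v(i^*)$ term. The assembly, however, has a real gap.

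Starting from Theorem~\ref{thm:SM}(\ref{def:sub3}) on the pair $(S,O)$ and then ``reducing to'' an upper bound on $\sum_{i\in O\setminus S} v(i\,|\,S)$ silently drops the term $-\sum_{i\in S\setminus O} v(i\,|\,S\cup O\setminus\{i\})$. For non-monotone $v$ these marginals can be negative, so the dropped term can be \emph{positive}, and the inequality $v(O)\le v(S)+\sum_{i\in O\setminus S} v(i\,|\,S)$ simply fails. Relatedly, your invocation of Lemma~\ref{lem:sampling} ``with $f(T)=v(T\cup S)$ on $T\subseteq O$'' points the lemma in the wrong direction: $S$ is the random set whose elements appear with probability at most $p$, not $O$. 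The paper's route is to bound $v(S\cup S^*)$ rather than $v(S^*)$: applying Lemma~\ref{lem:sampling} to $g(B)=v(B\cup S^*)$ with the random set $S$ yields $\mathbb{E}[v(S\cup S^*)]\ge(1-p)\,v(S^*)$, and this is precisely how non-monotonicity is handled and where the $(1-p)$ factor enters.

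The charging argument then only needs to control $v(S\cup S^*)-v(S)$, and here your ``almost integral companion'' intuition is correct but too vague to carry the proof. The paper makes it concrete by maintaining a \emph{fuzzy} set that is initialized to $S^*$ and, whenever an item is added to $S$, evicts elements of $S^*\setminus S$ in increasing cost order (the last one possibly fractionally) until the budget is restored. The crucial payoff of this specific eviction rule is feasibility: every evicted element has cost at most that of the last-removed item $\hat\imath$, hence at most the residual budget at the time of eviction, so it was a legal candidate and the greedy density dominates its density. This is what makes the per-step charge $c(Q_j)\cdot v(j\,|\,S_j)/c_j$ valid; your split into ``considered but coin $0$'' versus ``never considered'' does not give you this feasibility guarantee for the large-cost optimal items that get squeezed out.
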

\begin{proof}
	For the analysis of the algorithm we are going to use the auxiliary set $O$, an 
	extension of the set $S$ that respects the knapsack constraint and uses 
	feasible items from an optimal solution. In particular, let $S^*$ be an optimal 
	solution and let $s_1, s_2, \ldots, s_{r}$ be its elements. 
	
	Then, $O$ is a \emph{fuzzy} set that is initially equal to $S^*$ and 
	during each iteration of the \emph{while} loop it is updated as follows:
	\begin{itemize}
		\item If $r_i = 1$, then $O = O \cup \{i\}$. In case this addition violates the 
		knapsack constraint, i.e., $c(O)>B$, then we repetitively remove items from 
		$O\setminus S$ in increasing order with respect to their cost until the cost 
		of $O$ becomes exactly $B$. Note that this means that the last item 
		removed may be removed only partially. More precisely, if $c(O)>B$ and 
		$c(O\setminus \{s_j\}) \le B$, where  $s_j$ is the item of $S^*$ of maximum 
		index in $O\setminus S$, then we keep a $\big( B - c( O )  + c_j \big)\,  /c_j$ 
		fraction of $s_j$ in $O$ and stop its update for the current iteration.
		\item Else (i.e., if $r_i = 0$), $O = O \setminus \{i\}$.
	\end{itemize}
	If an item $j$ was considered (in \textcolor{links}{line} \ref{line:argmax}) in some iteration of the 
	\emph{while} loop, then let $S_j$ and $O_j$ denote the sets $S$ and $O$, 
	respectively, at the beginning of that iteration. Moreover, let $O'_j$ denote 
	$O$ at the end of that iteration. If $j$ was never considered, then $S_j$ and 
	$O_j$ (or $O'_j$)  denote the final versions of $S$ and $O$, respectively. In 
	fact, in what follows we exclusively use $S$ and $O$ for their final versions.
	
	It should be noted that, for all $j\in A$, $S_j\subseteq O_j$ and also no item 
	in $O_j\setminus S_j$ has been considered in any of the previous iterations of 
	the \emph{while} loop.

	Before stating the next lemma, let us introduce some notation for the sake of 
	readability. Note that, by construction, $O\setminus S$ is either empty or 
	consists of a single fractional item $\hat \imath$. In case $O\setminus S = 
	\emptyset$, by $\hat \imath$ we denote the last item removed from $O$. For 
	every $j\in A$, we define $Q_j = O_j \setminus (O_j' \cup S 
	\cup\{\hat\imath\,\})$. Note that if $j$ was never considered during the 
	execution of the algorithm, then $Q_j =\emptyset$.
	
	\begin{lemma}\label{lem:O_vs_SUS*}
		For every  realization of the Bernoulli random variables, it holds that 
		\[v(S\cup S^*) \le v(S) + v(\hat\imath\,) + \displaystyle\sum_{j\in A} c(Q_j)\, 
		\frac{v( j \,|\, S_j) }{c_{j}}.\]
	\end{lemma}
	\begin{proof}[Proof of \cref{lem:O_vs_SUS*}]
		Assume that the random bits $r_1, r_2, \ldots$ are fixed.
		Also, without loss of generality, assume the items are numbered according 
		to the order in which they are considered by \textsc{SampleGreedy}, with 
		the ones not considered by the algorithm numbered arbitrarily (but after the 
		ones considered). That is, item $j$---if considered---is the item considered 
		during the $j^{th}$ iteration. 
		
		Consider now any round $j$ of the while loop of \textsc{SampleGreedy}. 
		An item is removed from $O_j$ in two cases. First, it could be item $j$ 
		itself that was originally in $S^*$ but $r_j=0$ (and hence it will never get 
		back in $O_k$ for any $k > j$). Second, it could be some other item that 
		was in $S^*$ and is taken out to make room for the new item $j$. In the 
		latter case the only possibility for the removed item to return in $O_k$ for 
		some $k>j$ is to be selected by the algorithm and inserted in $S$. We can 
		hence conclude that $Q_j \cap Q_k = \emptyset$ for all $j \neq k.$
		In addition to that, it is clear that $S \cup S^* = S 
		\cup\{\hat\imath\,\}\cup_{j=1} Q_j.$
		
		Therefore, if items $1, 2, \dots, \ell$ where all the items ever considered, 
		using  submodularity and the fact that $S_j \subseteq S \subseteq S 
		\cup_{r = j+1}^{\ell} Q_r $, we have
		\begin{eqnarray}\label{eq:O_vs_SUS*}
		\nonumber
		v(S \cup S^*) - v(S) - v(\hat\imath\,) & \le & v((S \cup S^*) 
		\setminus\hat\imath\,) - v(S) = \sum_{j=1}^{\ell} v\left(Q_j\,\big|\, S \cup_{r = 
		j+1}^{\ell} Q_r\right) \\ \nonumber &\leq& \sum_{j=1}^{\ell} v\left(Q_j\,|\, S_j 
		\right) \leq \sum_{j=1}^{\ell} \sum_{x \in Q_j} \frac{v(x\,|\,S_j)}{c_x}\cdot c_x  
		\\\nonumber
		&\leq &  
		\sum_{j=1}^{\ell} \sum_{x \in Q_j} \frac{v(j\,|\,S_j)}{c_j}\cdot c_x = 
		\sum_{j=1}^{\ell} \frac{v(j\,|\,S_j)}{c_j}\cdot \sum_{x \in Q_j}c_x \\ & = 
		&\sum_{j=1}^{\ell} \frac{v(j\,|\,S_j)}{c_j}\cdot c(Q_j)\,,
		\end{eqnarray}
		where in a slight abuse of notation we consider $c_x$ to be the fractional 
		(linear) cost if $x \in Q_j$ is a fractional item. While the first three 
		inequalities directly follow from the submodularity of $v$, for the last 
		inequality we need to combine  the optimality of $v(j\,|\,S_j)/c_j$ at the step 
		$j$ was selected with the fact that every single item $x$ appearing in the 
		sum $\sum_{j=1}^{\ell} \sum_{x \in Q_j} v(x\,|\,S_j)$ was feasible (as a whole 
		item) at that step. The latter is true because of the way we remove items 
		from $O$. If $x$ is removed, it is removed before (any part of) $\hat\imath$ 
		is removed. Thus, $x$ is removed when the available budget is still at least 
		$c_{\hat\imath}$. Given that $c_x \le c_{\hat\imath}$, we get that $x$ is 
		feasible until removed. 
		
		To conclude the proof of the Lemma it is sufficient to note that $c(Q_j)=0$ 
		for all items that were not considered.
	\end{proof}
	
	While the previous Lemma holds for \textit{each} realization of the random 
	coin tosses in the algorithm, we next consider inequalities holding in 
	expectation over the randomness of the $\{r_i\}_{i=1}^{|A|}$ in 
	\textsc{SampleGreedy}. The indexing of the elements is hence to be 
	considered deterministic and fixed in advance, not as in the proof of 
	\cref{lem:O_vs_SUS*}.
	
	\begin{lemma}\label{lem:S_vs_sum}
		$\displaystyle\E{\sum_{j\in A} c(Q_j)\, \frac{v( j \,|\, S_j) }{c_{j}}} \le 
		\frac{\max\{p, 1-p\}}{p}\,\E{v(S)}$ 
	\end{lemma}
	\begin{proof}[Proof of \cref{lem:S_vs_sum}]
		For all $i \in A$, we define $G_i$ to be the random gain because of $i$ at 
		the time $i$ is added to the solution ($G_i = v(i\,|\,S_i) $ if $i$ is added and 
		$0$ otherwise)
		
		Since $v(S) = \sum_{i \in A}G_i$, by linearity, it suffices to show that the 
		following inequalities hold in expectation over the coin tosses:
		\begin{equation}\label{eq:decouple}
		c(Q_i) \, \frac{v(i\,|\,S_i)}{c_i} \leq \frac{\max\{p, 1-p\}}{p} \,G_i, \quad \forall i 
		\in A.
		\end{equation}
		In order to achieve that, following \citep{FeldmanHK17}, let $\Ei$ be any 
		event specifying the random choices of the algorithm up to the point $i$ is 
		considered (if $i$ is never considered, $\Ei$ captures all the randomness).
		If $\Ei$ is an event that implies $i$ is not considered, then the 
		\cref{eq:decouple} is trivially true, due to $G_i=0$ and $Q_i = \emptyset$.
		We focus now on the case $\Ei$ implies that $i$ is considered. Analyzing 
		the algorithm, it is clear that
		\begin{equation}
		\E{c(Q_i)\,|\,\Ei} \le 
		\begin{cases}
		0 \cdot \p{r_i = 1} + c_i \cdot \p{r_i = 0} = (1-p) \cdot c_i\,, \quad &\text{if $i 
		\in O_i$}\,,\\
		c_i  \cdot \p{r_i = 1} + 0 \cdot \p{r_i = 0} = p \cdot c_i\,, \quad \quad 
		&\text{otherwise}.
		\end{cases}
		\end{equation}
		In short, $\E{c(Q_i)\,|\,\Ei} \leq \max\{p, 1-p\} \cdot c_i.$ It is here that we 
		use the fuzziness of $O$: without the fractional items it would be hopeless 
		to bound $c(Q_t)$ with $c_t$.
		
		At this point, we exploit the fact that $\Ei$ contains the information on 
		$S_{i}$, i.e., $S_{i}=S_{i}(\Ei)$ deterministically. Recall that $S_i$ is the 
		solution set at the time item $i$ is considered by the algorithm.

		\begin{align*}
		\E{G_i\,|\,\Ei} &= \p{i \in S\,|\,\Ei}v(i\,|\,S_i) = \p{r_i = 1}v(i\,|\,S_i) = p \cdot 
		c_i\, \frac{v(i\,|\,S_i)}{c_i}\\ &  \geq \frac{p}{\max\{p, 1-p\}} \, \E{c(Q_i)\,|\,\Ei} 
		\, \frac{v(i\,|\,S_i)}{c_i}\\ 
		& = \frac{p}{\max\{p, 1-p\}} \, \E{c(Q_i)\, \frac{v(i\,|\,S_i)}{c_i} \,\Big|\,\Ei}.
		\end{align*}
		We can hence conclude the proof by using the law of total probability over 
		$\Ei$ and the monotonicity of the conditional expectation:
		\begin{align*}
		\E{G_i} = \E{\E{G_i\,|\,\Ei}} &\geq \E{\frac{p}{\max\{p, 1-p\}} \,\E{c(Q_i)\, 
		\frac{v(i\,|\,S_i)}{c_i} \,\Big|\,\Ei}} = \\ &= \frac{p}{\max\{p, 1-p\}} \,\E{c(Q_i) \, 
		\frac{v(i\,|\,S_i)}{c_i}}.
		\end{align*}
	\end{proof}

	\begin{lemma}\label{lem:Feige+}
		$\displaystyle v(S^*)\le \frac{1}{1-p}\,\E{v(S\cup S^*)}$.
	\end{lemma}
	\begin{proof}[Proof of \cref{lem:Feige+}]
		Let $S^*$ be an optimal set for the constrained submodular maximization 
		problem. We define $g: 2^A \to \mathbb{R}_+$ as follows: $g(B)= v(B \cup 
		S^*).$ It is a simple exercise to see that such function is indeed 
		submodular, moreover $g(\emptyset)=v(S^*).$
		If we now apply \cref{lem:sampling} to $g$, observing that the elements in 
		the set $S$ output by the algorithm are chosen with probability at most 
		$p$, we conclude that:
		\[
		\E{v(S \cup S^*)} = \E{g(S)} \geq (1-p) g(\emptyset) = (1-p) v(S^*).
		\]
	\end{proof}
	
	Combining
	\textcolor{links}{Lemmata} \ref{lem:O_vs_SUS*}, \ref{lem:S_vs_sum} and \ref{lem:Feige+} we get
	\begin{align}
	\nonumber
	(1-p)v(S^*)  & \le \E{v(S\cup S^*)}\\
	\nonumber
	& \le \E{v(S) + v(\hat\imath\,)+\sum_{j\in A} c(Q_j)\, \frac{v( j \,|\, S_j) }{c_{j}}}\\
	\nonumber
	& \le  \E{v(S)} + v(i^*) + \frac{\max\{p, 1-p\}}{p}\,\E{v(S)} \\
	\label{eq:large_inst}
	& = \max\left\{2, \tfrac{1}{p}\right\} \cdot \E{v(S)} + v(i^*) \,.
	\end{align}
	By substituting $\sqrt{2} - 1$ for $p$, this yields $v(S^*) \le  
	(3+2\sqrt{2})\max\{\E{v(S)},v(i^*)\}$.
	This establishes the claimed approximation factor.
\end{proof}

A naive implementation of \textsc{SampleGreedy} needs $\Theta(n^2)$ value 
oracle calls in the worst case. Indeed, in each iteration all the remaining elements 
have their marginals updated and for large enough $B$ the greedy solution may 
contain a constant fraction of $A$. Applying lazy evaluations \citep{Minoux78}, 
however, we can cut the number of queries down to 
$O(n\varepsilon^{-1}\log{(n/\varepsilon)})$ losing only an $\varepsilon$ in the 
approximation factor (see also \citep{EneN19icalp_a}). To achieve this, instead 
of recomputing all the marginals at every step, we maintain an ordered queue of 
the elements sorted by their last known \emph{densities} (i.e., their marginal 
value per cost ratios) and use it to get a \textit{sufficiently good} element to add. 

More formally, the lazy implementation of \textsc{SampleGreedy} maintains the 
elements in a priority queue in decreasing order of density, which is initialised 
using the ratios $v(i) / c_i$. 
At each step we pop the element on top of the queue. 
If its density with respect to the current solution is within a $1+\varepsilon$ 
factor of its old one, then it is picked by the algorithm, otherwise it is reinserted 
in the queue according to its new density and we pop the next element. 
Submodularity guarantees that the density of a picked element is at least 
$1/(1+\varepsilon)$ of the best density for that step. 
As soon as an element has been updated $\log(n/\varepsilon)/\varepsilon$ times, 
we discard it. 

\begin{theorem}\label{thm:lazygreedy}
	The lazy version of \textsc{SampleGreedy} achieves an approximation factor 
	of $3+2\sqrt{2}+\varepsilon$
	using $O(n\varepsilon^{-1}\log{(n/\varepsilon)})$ value oracle calls.
\end{theorem}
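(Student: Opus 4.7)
The plan is to verify both the oracle-call count and the approximation factor by building on the proof of \cref{thm:greedy}. For the query complexity, each item enters the priority queue exactly once with its initial density $v(i)/c_i$ (one oracle call per item at setup). Thereafter, every time an item is popped from the top of the queue we spend a single oracle call to recompute its current density; either the updated density is within a $(1+\varepsilon)$ factor of the stored one, in which case the item is processed in this step, or it is reinserted with its new density. By the discarding rule, an element experiences at most $\log(n/\varepsilon)/\varepsilon$ such reinsertions before being permanently removed, giving $O(n\varepsilon^{-1}\log(n/\varepsilon))$ oracle calls in total.

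For the approximation guarantee, I would re-run the three lemmas of \cref{thm:greedy} with two modifications. In \cref{lem:O_vs_SUS*}, the critical chain of inequalities bounds $v(x\,|\,S_j)/c_x$ by $v(j\,|\,S_j)/c_j$ for every $x \in Q_j$; in the lazy version the algorithm picks an item whose density approximates the maximum available one, so this becomes $v(x\,|\,S_j)/c_x \le (1+\varepsilon)\, v(j\,|\,S_j)/c_j$ whenever $x$ is still in the queue at step $j$. For items discarded before they could be picked, the last recorded density is at most a factor $(1+\varepsilon)^{-\log(n/\varepsilon)/\varepsilon} \le \varepsilon/n$ of its initial density $v(x)/c_x \le v(i^*)/c_x$, so the marginal gain attributable to such an element is at most $\varepsilon v(i^*)/n$. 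Summed over the at most $n$ discarded elements, their total extra contribution to the bound of \cref{lem:O_vs_SUS*} is at most $\varepsilon v(i^*)$. Crucially, \cref{lem:S_vs_sum} and \cref{lem:Feige+} carry over without change because they depend only on the probabilistic structure of the sampling step, not on which element is greedily chosen in each round.

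Combining these modified ingredients as in the original proof turns inequality \eqref{eq:large_inst} into
\[
(1-p)v(S^*) \;\le\; (1+\varepsilon)\max\!\left\{2,\tfrac{1}{p}\right\}\mathbb{E}[v(S)] + (1+\varepsilon) v(i^*),
\]
which, setting $p=\sqrt{2}-1$, gives an approximation ratio of $3+2\sqrt{2}+O(\varepsilon)$; rescaling $\varepsilon$ by a suitable constant then yields the claimed $3+2\sqrt{2}+\varepsilon$ bound.

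The main obstacle will be handling the fuzzy set $O$ under lazy updates. One has to re-examine the combinatorics behind $Q_j$ to confirm that items of $S^*$ which get discarded by the lazy rule can still be cleanly charged either to the densities of picked items (when they sit in the queue at the moment a conflicting swap occurs) or to the small-density tail that contributes at most $\varepsilon v(i^*)$. A careful bookkeeping of when elements are considered versus merely reinserted, and when $r_i$ is actually sampled, will also be needed to ensure that the conditional expectation argument underlying \cref{lem:S_vs_sum} is not disturbed by the lazy mechanism.
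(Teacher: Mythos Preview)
Your proposal is correct and follows essentially the same route as the paper's proof: bound the query count by the reinsertion limit, leave \cref{lem:S_vs_sum} and \cref{lem:Feige+} untouched, and modify \cref{lem:O_vs_SUS*} by (i) inserting a $(1+\varepsilon)$ slack in the density comparison and (ii) adding a separate additive term for elements of $S^*$ discarded after too many updates. The paper formalizes (ii) by introducing sets $W_j$ of newly inactive elements and bounding $\sum_j v(W_j\,|\,S_j)\le \tfrac{\varepsilon}{6}v(S^*)$, whereas you bound the same quantity by $\varepsilon v(i^*)$; since $v(i^*)\le v(S^*)$ these are equivalent up to the constant absorbed by rescaling~$\varepsilon$.
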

\begin{proof}
	For a given $\varepsilon\in (0,1)$ let $\varepsilon' = \varepsilon/6$. We 
	perform lazy evaluations using $\varepsilon'$, with $\log$
	denoting the binary logarithm.
	
	It is straightforward to argue about the number of value oracle calls. Since the 
	marginal value of each element $i$ has been updated at most 
	$\frac{\log(n/\varepsilon')}{\varepsilon'}$ times, we have a total of at most 
	$n\frac{\log(n/\varepsilon')}{\varepsilon'} = 
	O\big(\frac{n\log(n/\varepsilon)}{\varepsilon}\big)$ function evaluations.
	
	The approximation ratio is also easy to show. There are two distinct sources 
	of loss in approximation. We first bound the total value of the discarded 
	elements due to too many updates. This value appears as the upper bound of 
	an extra additive term in \cref{eq:O_vs_SUS*}. Indeed, now besides 
	$\sum_{j=1}^{\ell} v\left(Q_j\,\big|\, S \cup_{r = j+1}^{\ell} Q_r\right)$ we need 
	to account for the elements of $O$ that were ignored because of too many 
	updates. Such elements, once they become ``inactive'' do not contribute to 
	the cost of the current $O$ and are never pushed out as new elements come 
	into $S$. The definition of the $Q_j$s in the proof of \cref{thm:greedy} 
	should be adjusted accordingly.
	That is, if $W_j$ are the elements of $O$ that become inactive because they 
	were updated too many times during iteration $j$, we have
	\[v((S \cup S^*) \setminus\hat\imath\,) - v(S) \le \sum_{j=1}^{\ell} 
	v\left(Q_j\,\big|\, S_j\right) + \sum_{j=1}^{\ell} v\left(W_j\,\big|\, S_j\right).\]
	However, by noticing that for $x\in (0,1)$ it holds that $x\le \log (1+x)$, we 
	have
	\begin{eqnarray}
	\nonumber
	\sum_{j=1}^{\ell} v\left(W_j\,\big|\, S_j\right) & \le & \sum_{i\in \bigcup_j W_j} 
	(1+\varepsilon')^{-\frac{\log(n/\varepsilon')}{\varepsilon'}}v(i)  \\ \nonumber 
	&\leq& \sum_{i\in \bigcup_j W_j} 
	(1+\varepsilon')^{-\frac{\log(n/\varepsilon')}{\log(1+\varepsilon')}} \max_{k\in A} 
	v(k) \\\nonumber
	&\le  &  
	\sum_{i\in A} (1+\varepsilon')^{-\log_{1+\varepsilon'}(n/\varepsilon')} v(S^*) \\ 
	\nonumber
	& = &\sum_{i\in A} \frac{\varepsilon}{6n} v(S^*) = \frac{\varepsilon}{6} v(S^*)\,.
	\end{eqnarray}

	For the second source of loss in approximation, recall that the marginals only 
	decrease due to submodularity. So, we know that if some item $j$ is 
	considered during iteration $j$ (following the renaming of 
	\cref{lem:O_vs_SUS*}), then $(1+\varepsilon')v(j\,|\, S_j)/c_j \ge \argmax_{k\in F} 
	v(k\,|\, S_j)/c_k$. The only difference this makes (compared to the  proof of 
	\cref{thm:greedy}) is that in the last inequality of \cref{eq:O_vs_SUS*} 
	we have an extra factor of $1+\varepsilon'$.
	
	Combining the above, we get the following analog of
	\cref{lem:O_vs_SUS*}:
	\[v(S\cup S^*) \le v(S) + v(\hat\imath\,) + \frac{\varepsilon}{6} v(S^*)+ 
	\displaystyle\sum_{j\in A} \Big(1+\frac{\varepsilon}{6}\Big) c(Q_j)\, \frac{v( j \,|\, 
	S_j) }{c_{j}},\]
	which carries over to \cref{eq:large_inst}, while \textcolor{links}{Lemmata} \ref{lem:S_vs_sum} 
	and \ref{lem:Feige+} are not affected at all. It is then a matter of simple 
	calculations to see that for  $p=\sqrt{2} - 1$, we still get $v(S^*) \le  
	(3+2\sqrt{2}+\varepsilon)\max\{\E{v(S)},v(i^*)\}$.
\end{proof}

Additionally, our analysis implies that \textsc{SampleGreedy} performs 
significantly better in the \emph{large instance} scenario, i.e., when the value of 
the optimal solution is much larger than the value of any single element. While it 
is not expected to have  exact knowledge of the factor $\delta$ in the following 
proposition, often some estimate is accessible. Especially for massive 
instances, it is reasonable to assume that $\delta$ is bounded by a very small 
constant.

\begin{theorem}\label{thm:largemarketgreedy}
	If $\max_{i \in A}v(i)\leq \delta \cdot \textsc{opt}$ for $\delta \in (0,1/2)$, then 
	\textsc{SampleGreedy} with $p= \frac{1-\delta}{2}$ is a $\left( 4 + 
	\varepsilon_\delta\right)$-approximation algorithm, where $\varepsilon_\delta 
	= \frac{4\delta (2-\delta)}{(1-\delta)^2}.$ 
\end{theorem}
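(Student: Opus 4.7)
The plan is to reuse the entire chain of inequalities developed in the proof of \cref{thm:greedy}, stopping at the bound
\[(1-p)v(S^*) \le \max\!\left\{2,\,\tfrac{1}{p}\right\} \E{v(S)} + v(i^*)\]
established as \cref{eq:large_inst}. Nothing about this inequality depends on how large $v(i^*)$ is relative to $v(S^*)$, so it remains available verbatim. The extra hypothesis $v(i^*) \le \delta \cdot \mathrm{OPT} = \delta \, v(S^*)$ now lets us absorb the $v(i^*)$ term into the left-hand side, giving
\[(1-p-\delta)\, v(S^*) \le \max\!\left\{2,\,\tfrac{1}{p}\right\} \E{v(S)},\]
and thus, provided $p \in (0, 1-\delta)$, an approximation ratio of $\max\{2,1/p\}/(1-p-\delta)$ for $\E{v(S)}$. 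Since the algorithm returns $\max\{v(i^*), v(S)\}$, whose expectation is at least $\E{v(S)}$, the same ratio bounds the approximation of the actual output.

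The next step is to choose $p$ so as to minimize $\max\{2,1/p\}/(1-p-\delta)$. On $p \in (0,1/2]$ this equals $1/(p(1-p-\delta))$, which by elementary calculus is minimized at $p^* = (1-\delta)/2$, the midpoint of the interval $(0, 1-\delta)$; on $p \in [1/2, 1-\delta)$ the expression $2/(1-p-\delta)$ is increasing in $p$, so it is minimized at the endpoint $p = 1/2$, giving a worse value than $p^*$ for any $\delta > 0$. Since $\delta \in (0, 1/2)$ guarantees $p^* \in (1/4, 1/2)$, this choice is a valid probability and yields $1/p^* = 2/(1-\delta) \ge 2$, confirming that $\max\{2, 1/p^*\} = 1/p^*$.

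Plugging $p = (1-\delta)/2$ into the bound gives
\[v(S^*) \;\le\; \frac{2/(1-\delta)}{(1-\delta)/2}\, \E{v(S)} \;=\; \frac{4}{(1-\delta)^2}\, \E{v(S)}.\]
The remaining step is the routine algebraic identity
\[\frac{4}{(1-\delta)^2} \;=\; \frac{4(1-\delta)^2 + 4\delta(2-\delta)}{(1-\delta)^2} \;=\; 4 + \varepsilon_\delta,\]
which is immediate after expanding $(1-\delta)^2 = 1-2\delta+\delta^2$.

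There is no genuine obstacle: the heavy lifting has already been done in \cref{thm:greedy}, and the only subtlety is tuning $p$ to balance the two candidate values of $\max\{2,1/p\}$ against the shrinkage factor $(1-p-\delta)$. The symmetric optimizer $p^* = (1-\delta)/2$ turns out to land exactly at the kink where $1/p = 2$ would become active (which happens only in the limit $\delta\to 0$, recovering the $p = \sqrt{2}-1$ regime's companion value $1/2$), and this is what produces the clean closed form $4/(1-\delta)^2$.
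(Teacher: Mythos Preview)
Your proof is correct and follows essentially the same approach as the paper: start from \cref{eq:large_inst}, absorb $v(i^*)\le\delta\,v(S^*)$ into the left side, and optimize the resulting bound $\max\{2,1/p\}/(1-p-\delta)$ over $p\in(0,1-\delta)$. You supply more detail than the paper does on the optimization (locating the minimizer $p^*=(1-\delta)/2$, checking it lies in the $1/p\ge 2$ branch, and verifying the algebraic identity $4/(1-\delta)^2=4+\varepsilon_\delta$), whereas the paper simply asserts that optimizing yields the stated constant.
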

\begin{proof}
	Starting from \cref{eq:large_inst} and exploiting the large instance property, 
	we get:
	\begin{align*}
	(1-p)v(S^*) \leq \max\left\{2, \tfrac 1p \right\} \cdot \E{v(S)} + v(i^*) \leq 
	\max\left\{2, \tfrac 1p \right\} \cdot \E{v(S)} + \delta \cdot v(S^*) . 
	\end{align*}
	
	Rearranging the terms and assuming $ p + \delta < 1,$ we have:
	$$
	v(S^*) \leq \frac{\max\left\{2, \tfrac 1p \right\}}{1-p- \delta}.
	$$
	Optimizing for $p \in (0,1-\delta)$ we get the desired statement.
\end{proof}

\section{Adaptive Submodular Maximization} 
\label{sec:adaptive_greedy}

In this section we modify \textsc{SampleGreedy} to achieve a good 
approximation guarantee in the adaptive framework.
Recall that the adaptive valuation function $v(\cdot\,,\cdot)$ depends on the 
state of the system which is discovered a bit at a time, in an adaptive fashion. 
Indeed, \textsc{SampleGreedy} is compatible with this framework and can be 
applied nearly as it is. 
We stick to the interpretation of \textsc{SampleGreedy} discussed right before \cref{thm:greedy}. That is, there is no initial sampling phase. Instead, 
we directly begin to choose greedily with respect the density (marginal value 
with respect to the current solution over cost). Each time we are about to pick an 
element of $A$, we throw a $p$-biased coin that determines whether we keep or 
discard the element. 

Here the main difference with the greedy part of \textsc{SampleGreedy} is that 
the marginals are to be considered with respect to the \emph{partial realization} 
relative to the current solution. Moreover, since it is not possible to return the 
largest between $\max_{i\in A}v(i)$ and the result of the greedy exploration, the 
choice between these two quantities has to be settled before starting the 
exploration. Formally, at the beginning of the algorithm a $p_0$-biased coin is 
tossed to decide between the two. 
The pseudo-code for the resulting algorithm, \textsc{AdaptiveGreedy}, is given
below. 

\begin{algorithm}[ht!]
	\DontPrintSemicolon 
	\NoCaptionOfAlgo
	\algotitle{\textnormal{\textsc{AdaptiveGreedy}}}{fig:AdaptiveSampleGreedy.title}
	Let $r_0 \sim \mathrm{Bernoulli}(p_0)$\;
	\If{$r_0=1$}{
		$i^* \in \argmax_{k\in A} v\left( k \right) $ \tcc*{{\scriptsize best single item 
		in expectation}}
		Observe $\omega_{i^*}$ and \Return{$v(i^*, \omega_{i^*})$}\;}
	$S= \emptyset, R = B$ \label{line:S=0} \tcc*{{\scriptsize greedy solution and 
	remaining knapsack capacity}}
	$F=\left\{k\in A\,|\,v(k)>0\right\}$\tcc*{{\scriptsize initial set of candidate items}}
	\While{$F\neq \emptyset$}{
		Let $i \in \argmax_{k\in F} \displaystyle \tfrac{v(k\,|\,(S, \omega_{|S}))}{c_k}$\;
		Let $r_i\sim \mathrm{Bernoulli}(p)$ \label{line:r_i-adaptive} \tcc*{{\scriptsize 
		independent random bit}} 	
		\If{$r_i = 1$}{
			Observe $\omega_i: \ S = S \cup \{i\}, \ R = R - c_i$\;
		}
		$A = A\setminus\{i\}, \quad F=\left\{k\in A\,|\,v(k\,|\,(S, \omega_{|S}))>0 \text{ 
		and } c_k\le R\right\}$
	}
	\Return $S, v(S, \omega_{|S})$ \label{line:return}
	\caption{\textsc{AdaptiveGreedy}
	} \label{fig:AdaptiveSampleGreedy}
\end{algorithm}\smallskip

Before  proving that \textsc{AdaptiveGreedy} works as promised, we need some 
observations. Let us denote by $S$ the output of a run of our algorithm, and 
$S^*$ the output of a run of the optimal adaptive strategy. Fix a realization 
$\omega \in \Omega$.
Now, Lemma 1 of \citep{GotovosKK15} implies 
\[ \mathbb{E}\left[ v(S\cup S^*, \omega) | \omega \right]\geq (1-p) \cdot v(S^*, 
\omega).
\]
Since $\omega$ (and therefore, $S^*$) is fixed, the only randomness is due to 
the coin flips in our algorithm. We stress that the union between $S$ and $S^*$ 
has to be intended in the following sense: run our algorithm, and independently, 
also the optimal one, both for the same realization $\omega$. The previous 
inequality is true for any $\omega$. So, by the law of total probability, we also 
have 
\begin{equation}\label{eq:adaptive_sampling}
\mathbb{E}\left[ v(S\cup S^*) \right] \geq (1-p) \cdot \mathbb{E}\left[ v(S^*)\right].
\end{equation}

For the next observation, assume our algorithm has picked (and therefore 
observed) exactly set $S$. That is, we know only $\omega_{|S}$.
We number all items $a\in A$ with positive marginal with respect to $S$ by 
decreasing ratio $\frac{v\left(a|(S,\omega_{|S})\right)}{c_a}$, i.e.,
\[a_1= \argmax_{a\in A} \left\{ \frac{v\left(a|(S,\omega_{|S})\right)}{c_a}\right\},\]
and so on. Note that this captures a notion of the \emph{best-looking items after 
already adding $S$}. 

For $k= \min\{i\in \mathbb{N}| \sum_{l=1}^i c_l\geq B\}$, we get, in analogy to 
Lemma 1 of \citet{GotovosKK15}, 
\begin{equation}
\label{eq:sum_vs_adap}
\sum_{i=1}^k v\left(a_i|(S,\omega_{|S})\right) \geq \E{\sum_{a\in S^*} v\left( a| 
(S,\omega_{|S})\right)| \omega_{|S}} \geq v\left(S^*|(S,\omega_{|S})\right). 
\end{equation}
Note that it could be the case that $k$ is not well defined, as there may not be 
enough elements with positive marginal to fill the knapsack. If that is the case, 
just consider $k$ to be the number of elements with positive marginals.

The point of \cref{eq:sum_vs_adap} is that, given $(S, \omega_|S)$, the set of 
elements $a_1 \dots a_k$ is deterministic, while $S^*$ is not, because it 
corresponds to the set opened by the best adaptive policy. Moreover, in the 
middle term notice that the conditioning influences the valuation, but \emph{not} 
the policy, since we are assuming to run it obliviously. This is fundamental for 
the analysis.

Since this holds for any set $S$, we can again generalize to the expectation over all possible runs of the 
algorithm (fixing the coin flips or not, as they only influence $S$; the best 
adaptive policy or the best-looking items $a_1, a_2, \dots, a_k$ are not 
affected). So, we get
\begin{equation}\label{two}
\mathbb{E}\left[\sum_{i=1}^k v\left(a_i|(S,\omega_{|S})\right) \right]
\geq \mathbb{E}\left[ v\left(S^*|(S,\omega_{|S})\right) \right].
\end{equation}
We remark that $k$ above is a random variable which depends on $S$.
We use these observations to prove the ratio of our algorithm.

\begin{theorem}
	\label{thm:adaptive}
	For $p_0 = 1/3$ and $p=1/6$,  \textsc{AdaptiveGreedy} yields a 
	$9$-approximation of $\textsc{opt}_{\Omega}$, while its
	\emph{lazy} version achieves a  $(9+\varepsilon)$-approximation using 
	$O(n\varepsilon^{-1}\log{(n/\varepsilon)})$ value oracle calls.
	
	Moreover, when $\max_{i \in A}v(i)\le \delta \cdot  \textsc{opt}_{\Omega} \ $ 
	for $\delta \in (0,1/2)$, then for $p_0=0$ and $p= (\sqrt{3-2\delta}-1)/2,$ 
	\textsc{AdaptiveGreedy} yields a $(4+2\sqrt{3} + 
	\varepsilon'_{\delta})$-approximation,
	where $\varepsilon'_{\delta} \approx \frac{6\delta (2-\delta)}{(1-\delta)^2}$.
\end{theorem}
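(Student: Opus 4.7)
The plan is to mirror the analysis of Theorem~\ref{thm:greedy} in the adaptive setting, replacing Lemma~\ref{lem:Feige+} with the adaptive sampling bound~\eqref{eq:adaptive_sampling} and pushing the fuzzy-set construction pointwise over realizations $\omega\in\Omega$. The expected output of \textsc{AdaptiveGreedy} decomposes as $\mathbb{E}[v_{\mathrm{out}}] = p_0\, v(i^*) + (1-p_0)\,\mathbb{E}[v(S,\omega_{|S})]$: the first summand handles $r_0=1$ (the algorithm returns $v(i^*,\omega_{|i^*})$ whose expectation is $v(i^*)$) and the second summand handles the greedy phase, so the analysis reduces to bounding the greedy phase's expected value against $\textsc{opt}_\Omega$ and $v(i^*)$.

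For the greedy phase, I would fix $\omega$ and initialize the fuzzy auxiliary set $O$ to $S^*(\omega)$, the output of the optimal adaptive policy on~$\omega$. Because $v(\cdot,\omega)$ is state-wise submodular (Section~\ref{sec:adaptive_greedy}), every inequality in the derivation of Lemma~\ref{lem:O_vs_SUS*} carries through pointwise, yielding
\[
v(S\cup S^*,\omega) \le v(S,\omega_{|S}) + v(\hat\imath,\omega_{|\hat\imath}) + \sum_{j\in A} c(Q_j)\, \frac{v(j\mid (S_j,\omega_{|S_j}))}{c_j}.
\]
Taking expectations over $\omega$ and the Bernoulli coins, the analog of Lemma~\ref{lem:S_vs_sum} transfers almost verbatim: conditioning on the event $\Ei$ that captures all randomness up to the consideration of item $i$ (now also including the partial realization $\omega_{|S_i}$), the distributional identities governing $c(Q_i)$ and the gain $G_i$ depend only on the single Bernoulli~$r_i$.

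The main obstacle is controlling $\mathbb{E}[v(\hat\imath,\omega_{|\hat\imath})]$: unlike in Theorem~\ref{thm:greedy}, where $v(\hat\imath)\le v(i^*)$ is immediate and deterministic, here $\hat\imath$ is a random index drawn from $S^*(\omega)$ and can in principle correlate with $\omega_{\hat\imath}$ via the downstream decisions of the optimal policy. I would handle this by exploiting two structural facts: (i) the optimal policy commits to adding $\hat\imath$ \emph{before} observing $\omega_{\hat\imath}$, so the event $\{a\in S^*\}$ is independent of $\omega_a$; and (ii) our algorithm never observes $\omega_{\hat\imath}$ since $\hat\imath\notin S$ by the definition of the fuzzy set. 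A careful conditioning argument on the measurability of $\{\hat\imath = a\}$ with respect to $\omega_{\neq a}$ and the coins then yields a bound of the form $\mathbb{E}[v(\hat\imath,\omega_{|\hat\imath})] \le \gamma\, v(i^*)$ for a constant~$\gamma$ arising from the residual dependence on post-$\hat\imath$ observations of the optimal policy; this is the step that makes the adaptive approximation strictly worse than the $5.83$ achieved in Theorem~\ref{thm:greedy}.

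Combining everything with~\eqref{eq:adaptive_sampling} produces an inequality of the form $(1-p)\,\textsc{opt}_\Omega \le \alpha_p\,\mathbb{E}[v(S,\omega_{|S})] + \beta_p\, v(i^*)$ for explicit constants $\alpha_p,\beta_p$; recasting via the $p_0$-split and optimizing over $p_0,p$ balances the two terms against $\mathbb{E}[v_{\mathrm{out}}]$ and yields the $9$-approximation at $p_0=1/3$, $p=1/6$. The lazy variant is immediate from Theorem~\ref{thm:lazygreedy}: the extra loss from $(1+\varepsilon)$-approximate density choices and the discarding of overly-updated elements is purely submodularity-based and transfers state-wise. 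Finally, for the large-instance regime I would set $p_0=0$ (so the greedy phase runs almost surely), absorb the $v(i^*)$ term via $v(i^*)\le \delta\,\textsc{opt}_\Omega$, and optimize the resulting single-variable expression in~$p$; the specified $p=(\sqrt{3-2\delta}-1)/2$ arises as the positive root of the associated quadratic, producing the $(4+2\sqrt{3}+\varepsilon'_\delta)$-approximation.
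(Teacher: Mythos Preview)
There is a genuine gap at the heart of your fuzzy-set transfer. Your displayed inequality
\[
v(S\cup S^*,\omega) \le v(S,\omega_{|S}) + v(\hat\imath,\omega_{|\hat\imath}) + \sum_{j\in A} c(Q_j)\, \frac{v(j\mid (S_j,\omega_{|S_j}))}{c_j}
\]
does \emph{not} follow pointwise from state-wise submodularity. Steps 1--5 of the chain in the proof of Lemma~\ref{lem:O_vs_SUS*} do carry over for fixed $\omega$ and produce $\sum_j\sum_{x\in Q_j} v(x\mid S_j,\omega)$ on the right, i.e., \emph{realized} marginals. But the final inequality (step 6) invokes the greedy rule, and \textsc{AdaptiveGreedy} selects $j$ to maximize the \emph{adaptive} marginal $v(j\mid (S_j,\omega_{|S_j}))$, which is a conditional expectation, not the realized $v(j\mid S_j,\omega)$. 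For $x\in Q_j\subseteq S^*(\omega)\setminus S_j$ we have no pointwise comparison between $v(x\mid S_j,\omega)/c_x$ and $v(j\mid (S_j,\omega_{|S_j}))/c_j$.

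Passing to expectations does not rescue the step either: conditioning on the history $\mathcal{E}_j$ (past coins and $\omega_{|S_j}$), the set $Q_j$ is still random because it is carved out of $S^*(\omega)$, and the optimal adaptive policy's output depends on all of $\omega$. Hence the indicator $\mathbb{1}[x\in Q_j]$ and the realized marginal $v(x\mid S_j,\omega)$ are correlated through $\omega_{|A\setminus S_j}$, so $\mathbb{E}\big[\sum_{x\in Q_j} v(x\mid S_j,\omega)\,\big|\,\mathcal{E}_j\big]$ cannot be bounded term-by-term by $c(Q_j)\,v(j\mid(S_j,\omega_{|S_j}))/c_j$. This is the actual obstruction, not the $\hat\imath$ term you focus on; the latter is downstream and would inherit the same correlation problem anyway.

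The paper sidesteps this entirely and does \emph{not} port the fuzzy set $O$. Instead, after the greedy run it looks at the items $a_1,\dots,a_k$ of highest adaptive density with respect to the \emph{final} partial realization $(S,\omega_{|S})$; these are deterministic given the algorithm's output, so the greedy comparison is clean. The link to $S^*$ is made once, via \eqref{eq:sum_vs_adap}, which bounds $v(S^*\mid(S,\omega_{|S}))$ by $\sum_i v(a_i\mid(S,\omega_{|S}))$ using adaptive submodularity. The items $a_1,\dots,a_k$ are then split into those the algorithm considered but rejected (bounded by $\tfrac{1}{p}\,\mathbb{E}[v(S)]$ via a Lemma~\ref{lem:S_vs_sum}-style argument) and those never considered (bounded by $\mathbb{E}[v(S)]+v(i^*)$ via a budget argument). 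Combining with \eqref{eq:adaptive_sampling} yields $(2p+1)\mathbb{E}[v(S)]+p\,v(i^*)\ge p(1-p)\,\textsc{opt}_\Omega$, after which the $p_0$ split and the large-instance and lazy variants go through as you describe.
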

\begin{proof}
	For any run of the algorithm, i.e., a fixed set $S$, the corresponding partial 
	realization $\omega_{|S}$ and the coin flips observed, define for convenience 
	the set $C$ as those items in $\{a_1,\dots,a_k\}$ that have been considered 
	during the algorithm and then not added to $S$ because of the coin flips. 
	Define $U= \{a_1,\dots,a_k\}\setminus C$.
	Additionally, define $C'$ to be the set of all items that are considered, but not 
	chosen during the run of our algorithm which have positive expected marginal 
	contribution to $S$. I.e., $C$ captures the items from the 
	\emph{good-looking} set after choosing $S$ that we missed due to coin 
	tosses, and $C'$ \emph{all} items we missed for the same reason which 
	should have had a positive contribution in hindsight. Note that
	$C \subseteq C'$.
	
	We can then split the left hand side term of \cref{two} into two parts: the sum 
	over $C$ (upper bounded by the sum over $C'$), and the sum over $U.$ Now we control separately these terms using linear 
	combinations of $v(S)$ and $v(i^*).$
	\begin{lemma}
		$\mathbb{E}[v(S)]  \geq p \cdot \mathbb{E}\left[\sum_{a\in 
		C}v\left(a|(S,\omega_{|S})\right)\right]$
	\end{lemma}
	
	\begin{proof}
		Since $C \subseteq C'$ and $C'$ contains all considered elements with 
		nonnegative expected contribution to $S$, it is sufficient to show 
		$\mathbb{E}[v(S)] \geq p \cdot \mathbb{E}\left[\sum_{a\in 
		C'}v\left(a|(S,\omega_{|S})\right)\right].$
		
		We proceed as in \cref{lem:S_vs_sum}. Let's consider for each $a \in A$ all 
		the events $\Ea$ capturing the story of a run of the algorithm up to the 
		point element $a$ is considered (all the history if it is never considered).
		
		Let $G_a$ be the marginal contribution of element $a$ to the solution set 
		$S$.
		If $\Ea$ corresponds to a story in which element $a$ is not considered, 
		then it does not contribute - neither in the left, nor in the right hand side of 
		the inequality we are trying to prove. Else, let $(S_a, \omega_a)$ be the 
		partial solution when it is indeed considered:
		\begin{align*}
		\mathbb{E}[G_a|\Ea] &= p \cdot v(a|(S_a,\omega_a))  \geq p \cdot 
		\mathbb{E}[v(a|(S,\omega_{|S}))|\Ea].
		\end{align*}
		The statement follows from the law of total probability with respect to 
		$\Ea$, and state-wise submodularity of $v$.
	\end{proof}

	\begin{lemma}
		$\mathbb{E}\left[ v(S) \right] +  v(i^*) \geq \mathbb{E}\left[ \sum_{a\in U} 
		v\left(a|(S,\omega_{|S}) \right) \right].$
	\end{lemma}
	
	\begin{proof}
		
		Now let us turn towards the items $U$ that were not considered by the 
		algorithm. The intuition behind the claim is that if they were not considered 
		then they were not good enough, in expectation, to compare with $S$. The 
		proof, though, has to deal with some probabilistic subtleties.
		
		Let's start fixing a story of the algorithm, i.e., the coin tosses and $(S, 
		\omega_S)$, $S=s_1, s_2, \dots, s_T$, numbered according to their 
		insertion in S, i.e., $s_i$ is the $i^{th}$ element to be added to $S$. 
		For the sake of simplicity let's also renumber the elements in $U$ as $a_1, 
		\dots, a_l$ respecting the order given by the marginals over costs. 
		
		There are two cases. If during the whole algorithm the elements in $U$ 
		have ratio $\frac{v\left(a|(S_i, \omega_{|S_i})\right)}{c_a}$ smaller than that of 
		the item which was instead considered, then one can easily argue, by 
		adaptive submodularity, that:
		\begin{align*}
		\sum_{a\in U} v\left(a|(S,\omega_{|S}) \right) &\leq \sum_{t=1}^T v(s_t|(S_t, 
		\omega_t)) + v(u_1|(S, \omega_{|S}) \leq \\
		&\leq \sum_{t=1}^T v(s_t|(S_t, \omega_t)) + v(u_1)  \leq \sum_{t=1}^T 
		v(s_t|(S_t, \omega_t)) + v(i^*).
		\end{align*}
		being $S_t=(s_1,\dots, s_{t-1})$ and $\omega_t$ the restriction of 
		$\omega_{|S}$ to $S_t.$
		Note that the last element $u_1$ is added to account for the unspent 
		budget by the solution. 
		We claim that the above inequality holds also in the case in which there is 
		an element in $U$ whose marginal over cost is greater than that of some in 
		$S$. Such an element can exist because of the budget constraint: during 
		the algorithm it had better marginal over cost, but was discarded because 
		there was not enough room for it. We observe there can exist at most one 
		such element, due to the budget constraint and because its value is upper 
		bounded by $u_1$, so the above formula still holds.
		
		Once we know that, by law of total probability, we have
		
		\begin{equation}\label{four}
		\mathbb{E}\left[ \sum_{a\in U} v\left(a|(S,\omega_{|S}) \right) \right] \leq 
		\mathbb{E}\left[v(S)\right] + v(i^*) ,
		\end{equation}
		concluding the proof.
	\end{proof}
	
	Combining the two Lemmata we get:
	\begin{align*}
	(1+\tfrac 1p) \,\mathbb{E}\left[ v(S) \right] + \mathbb{E}\left[ v(i^*) \right] &\geq  
	\mathbb{E}\left[ \sum_{a\in U} v\left(a|(S,\omega_{|S}) \right) 
	\right]+\mathbb{E}\left[ \sum_{a\in C} v\left(a|(S,\omega_{|S}) \right) \right]=\\
	&=\mathbb{E}\left[ \sum_{a\in U\cup C} v\left(a|(S,\omega_{|S}) \right) \right].
	\end{align*}
	Equation \cref{two} implies
	\[
	(1+\tfrac 1p) \,\mathbb{E}\left[ v(S) \right] + \mathbb{E}\left[ v(i^*) \right] \geq
	\mathbb{E}\left[ \sum_{a\in U\cup C} v\left(a|(S,\omega_{|S}) \right) \right]\geq 
	\mathbb{E}\left[ v\left(S^*|(S,\omega_{|S})\right)\right] .\]
	Also, by \cref{eq:adaptive_sampling} and some algebra:
	\begin{align*}
	\mathbb{E}\left[ v\left(S^*|(S,\omega_{|S})\right)\right]& = \mathbb{E}\left[ 
	\mathbb{E}\left[ v(S^*\cup S,\omega)-v(S,\omega)|\omega_{_S}  \right] \right]\\
	& =\mathbb{E}\left[ v(S^*\cup S) \right]-\mathbb{E}\left[ v(S) \right]\\
	&\geq (1 - p)  \cdot \mathbb{E}[v(S^*)]-\mathbb{E}\left[ v(S)\right]
	\end{align*}
	All together, denoting as $OPT$ the $\mathbb{E}[v(S^*)]$, we get:
	\begin{equation}
	\label{eq:adapt_largeinst}
	(2p+1)\mathbb{E}[v(S)]+p\mathbb{E}[v(i^*)] \geq p(1-p) OPT
	\end{equation}
	
	Let's call $ALG$ the expected value of the solution output by the algorithm. 
	Since the algorithm chooses with a coin flip either the best expected single 
	item or $S$, it holds
	\[
	ALG =  (1-p_0)v(S) + p_0 v(i^*)
	\]
	Picking $p_0 = \tfrac{p}{3p+1}$,
	\[
	ALG = \frac{2p+1}{3p+1}\mathbb{E}[v(S)]+\frac{p}{3p+1}\mathbb{E}[v(i^*)] \geq 
	\frac{p(1-p)}{3p+1}OPT.
	\]
	The right hand side is minimized for $p=\tfrac 13$, concluding the proof of the 
	first part of the statement.
	
	The lazy version of \textsc{AdaptiveGreedy} is analogous to the non-adaptive 
	setting, both for the algorithm and the analysis, so we omit repeating the 
	proof.

	In order to prove the last part of the statement, we start from 
	\cref{eq:adapt_largeinst} and apply the large instance property:
	$$
	(1-p)\E{v(S^*)} \leq \left(2+\tfrac 1p\right) \E{v(S)} + \E{v(i^*)} \leq \left(2+\tfrac 
	1p\right) \E{v(S)} + \delta \cdot \E{v(S^*)}
	$$
	Rearranging terms and assuming $p + \delta < 1$ we have that:
	$$
	\E{v(S^*)} \leq \frac{\left(2+\tfrac 1p\right)}{1-p-\delta} \cdot \E{v(S)}. 
	$$
	Optimizing for $p \in (0,1-\delta)$, we get the claimed result. Specifically, for 
	${p=(\sqrt{3-2\delta}}-1)/2$ the approximation factor is $(4+2\sqrt{3} + 
	\varepsilon_{\delta})$, with 
	$$
	\varepsilon_{\delta}= 
	2\left(\frac{\sqrt{3-2\delta}+1}{(1-\delta)^2}+\frac{1}{1-\delta}-\sqrt 3 - 2\right) 
	\approx \frac{6\delta (2-\delta)}{(1-\delta)^2}.
	$$
\end{proof}

\section{Experiments}
\label{sec:experiments}

Out of the numerous applications of submodular maximization subject to a 
knapsack constraint, we evaluate \textsc{SampleGreedy} and  
\textsc{AdaptiveGreedy} on two selected examples, using real and synthetic 
graph topologies. Variants of these have been studied in a similar context; see 
\citep{MirzasoleimanBK16}.

As our algorithms are randomized, but extremely fast, we use the best output 
out of $n=5$ iterations. A delicate point is tuning the probabilities of acceptance 
$p$ (\textcolor{links}{line} \ref{line:r_i-adaptive} of \textsc{AdaptiveGreedy}) for improved 
performance.
While the choices of $p$ in \cref{thm:greedy,thm:adaptive} minimize our analysis 
of the theoretical worst-case approximation, there are two factors that suggest a 
value much closer to 1 works best in practice: the small value of any singleton 
solution, and the much better guarantee of  
\cref{lem:Feige+} 
for most widely used non-monotone submodular objectives. 
We do not micro-optimize for $p$ but rather choose uniformly at randomly from 
$[0.9, 1]$.

\newcommand{\helpv}{f}
\paragraph{Video Recommendation:}
Suppose we have a large collection $A$ of videos from various categories 
(represented as possibly intersecting subsets $C_1, \ldots, C_k \subseteq A$) 
and we want to design a recommendation system. When a user inputs a subset 
of categories and a target total length $B$, the system should return a set of 
videos from the selected categories of total duration at most $B$ that maximizes 
an appropriate objective function. (Of course, instead of time here, we could use 
costs and a budget constraint.)
Each video has a rating and there is some measure of similarity between any 
two videos.  We use a weighted graph on $A$ to model the latter: each edge 
$\{i,j\}$ between two videos $i$ and $j$ has a weight $w_{ij}\in [0,1]$ capturing 
the percentage of their similarity.
To pave the way for our $v(\cdot)$, we start from the auxiliary objective 
$\helpv(S)= \sum_{i\in S}\sum_{j\in A} w_{ij} - \lambda \sum_{i\in S}\sum_{j\in S} 
w_{ij}$,
for some $\lambda\ge 1$ \citep{LinB10,MirzasoleimanBK16}. This is a 
\emph{maximal marginal relevance} inspired objective \citep{CarbonellG17} that 
rewards coverage, while penalizing similarity. 
For $\lambda = 1$, internal similarities are irrelevant and $\helpv$ becomes a cut 
function. However, one can penalize similarities even more  severely as $\helpv$ 
is submodular for $\lambda\ge 1$ (e.g., \citet{LinB10} use $\lambda = 5$).

In order to mimic the effect of a partition matroid constraint, i.e., the avoidance 
of many videos from the same category, we may use two parameters 
$\lambda\ge 1, \mu \ge 0$. While $\lambda$ is as above, $\mu$ puts extra 
weight on similarities between videos that belong to the same category. That 
leads to a  more general auxiliary objective 
$g(S)= \sum_{i\in S}\sum_{j\in A} w_{ij} -  \sum_{i\in S}\sum_{j\in S} (\lambda + 
\chi_{ij}\mu)w_{ij}$,
where $\chi_{ij}$ is equal to $1$ if there exists $\ell$ such that $i, j \in C_{\ell}$ 
and $0$ otherwise.
To interpolate between choosing highly rated videos and videos that represent 
well the whole collection, here we use the submodular function $v(S) = \alpha 
\sum_{i\in S} \rho_i + \beta g(S)$ for $\alpha, \beta \ge 0$, where $\rho_i$ is the 
rating of video $i$. We use $\lambda =  3$, $\mu = 7$ and set the parameters 
$\alpha, \beta$ so that the two terms are of comparable size.

We evaluate \textsc{SampleGreedy} on an instance based on the
latest version of the 
\href{https://grouplens.org/datasets/movielens/25m/}{MovieLens dataset} 
\citep{movielens}, which includes 62000 movies, 13816 of which have 
\emph{both} user-generated tags \emph{and} ratings.
We calculate the weights $w_{ij}$ using these tags (with the L2 norm of the 
pairwise minimum tag vector, see \cref{app:experiments}) while the costs 
are drawn independently from $U(0,1)$.
We compare against the FANTOM algorithm of \citet{MirzasoleimanBK16} as it 
is the only other algorithm with a provable approximation guarantee that runs in 
reasonable time. Continuous greedy approaches \citep{FeldmanNS11} or the 
repeated greedy of \citet{GuptaRST10} are prohibitively slow. 
\textsc{SampleGreedy} consistently performs  better than FANTOM for a wide 
range of budgets (\cref{fig:sfig1}). Plotting the number of function 
evaluations against the budget, \textsc{SampleGreedy} is much faster (\cref{fig:sfig4}) despite the fact that it is run $5$ times!

\begin{remark}\label{rem:lazyfantom}
	The running time of FANTOM for fixed $\varepsilon$ is $O(nr\log n)$, where 
	$r$ is the cardinality of the largest feasible solution. For a knapsack constraint 
	this translates to $O(n^2\log n)$. To be as fair as possible, we implemented 
	FANTOM using lazy evaluations, which improves the number of evaluations 
	of the objective function to $O(n\log^2 n)$ and is indeed much faster in 
	practice, for the knapsack sizes we consider. Even so, our 
	\textsc{SampleGreedy} is faster by a factor of $\Omega(\log n)$ which, 
	including the improvement in the constants involved, still makes a huge 
	difference. Note that in both \cref{fig:sfig4,fig:sfig5} one can 
	discern the superlinear increase of the function evaluations for FANTOM but 
	not for \textsc{SampleGreedy}.
\end{remark}

\setlength{\belowcaptionskip}{-10pt}

\begin{figure}[ht!]
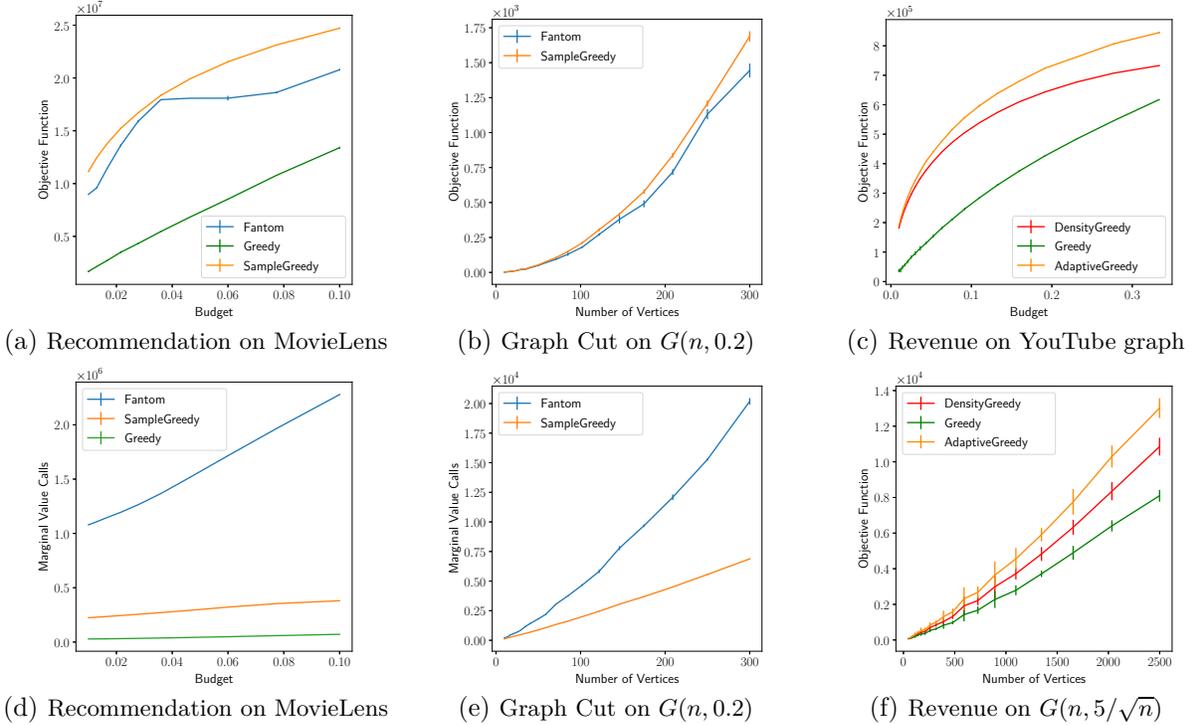

	\captionsetup[subfigure]{aboveskip=-3pt}
	\centering
	\begin{subfigure}{.33\textwidth}
		\centering
		\scalebox{0.33}{\input{movieLens_big_objective.pgf}}
		\caption{\footnotesize Recommendation on MovieLens}
		\label{fig:sfig1}
	\end{subfigure}\hspace{0.1pt}%
	\begin{subfigure}{.33\textwidth}
		\centering
		\scalebox{0.33}{\input{marginal_calls_gnp_objective.pgf}}
		\caption{\footnotesize Graph Cut on $G(n, 0.2)$}
		\label{fig:sfig2}
	\end{subfigure}\hspace{0.1pt}%
	\begin{subfigure}{.33\textwidth}
		\centering
		\scalebox{0.33}{\input{youtube_adaptive_objective.pgf}}
		\caption{\footnotesize Revenue on YouTube graph}
		\label{fig:sfig3}
	\end{subfigure}\\ \vspace{8pt}
	\begin{subfigure}{.33\textwidth}
		\centering
		\scalebox{0.33}{\input{movieLens_big_marginal.pgf}} 
		\caption{\footnotesize Recommendation on MovieLens}
		\label{fig:sfig4}
	\end{subfigure}\hspace{0.1pt}%
	\begin{subfigure}{.33\textwidth}
		\centering
		\scalebox{0.33}{\input{marginal_calls_gnp_marginal.pgf}} 
		\caption{\footnotesize Graph Cut on $G(n, 0.2)$}
		\label{fig:sfig5}
	\end{subfigure}\hspace{0.1pt}%
	\begin{subfigure}{.33\textwidth}
		\centering
		\scalebox{0.33}{\input{fixed_budget_adaptive_objective.pgf}}
		\caption{\footnotesize Revenue on $G(n, 5/\sqrt{n})$}
		\label{fig:sfig6}
	\end{subfigure}\vspace{8pt}%
	\caption{\small The four plots on the left compare the performance and the 
	number of function evaluations of \textsc{SampleGreedy} and FANTOM on 
	the video recommendation problem for the MovieLens dataset (a), (d) and on 
	the maximum weighted cut problem on random graphs (b), (e). Since no 
	$\varepsilon\le 1$ affected the performance of FANTOM noticeably before 
	becoming too computationally expensive, we used $\varepsilon = 1$ to 
	achieve the maximum possible speedup.  The plots on the far right illustrate 
	the performance of \textsc{AdaptiveGreedy} (ignoring single item solutions, 
	i.e., $p_0 = 0$) on the influence-and-exploit problem for two distinct 
	topologies: the YouTube graph (c) and random graphs (f). All budgets are 
	shown as fractions of the total cost.} 
\end{figure}
\paragraph{Influence-and-Exploit Marketing:}
Consider a  seller of a single digital good (i.e., producing extra units of the good 
comes at no extra cost) and a social network on a set $A$ of potential buyers. 
Suppose that the buyers influence each other and this is quantified by a weight 
$w_{ij}$ on each edge $\{i, j\}$ between buyers $i$ and $j$. 
Each buyer's value for the good depends on who owns it within her immediate 
social circle and how they influence her. 
A possible revenue-maximizing strategy for the seller is to first give the item for 
free to a selected set $S$ of influential buyers (influence phase) and then extract 
revenue by selling to each of the remaining buyers at a price matching their value 
for the item due to the influential nodes (exploit phase). 
Here we further assume, similarly to the adaptation of this model by 
\citet{MirzasoleimanBK16}, that each buyer comes with a cost of convincing her 
to advertise the product to her friends. The seller has a budget $B$ and the  set 
$S$ should be such that $\sum_{i\in S} c_i \le B$.

We adopt the generalization of the \emph{Concave Graph Model} of 
\citet{HartlineMS08} to non-monotone functions \citep{BabaeiMJS13}. Each 
buyer $i\in A$ is associated with a non-negative concave function $f_i$. For any 
$i\in A$ and any set $S\subset A\mysetminus\{i\}$ of agents already owning the 
good, the value of $i$ for it is $v_i(S) = f_i\big(\sum_{j\in S\cup\{i\}}w_{ij}\big)$. 
The total potential revenue $v(S) = \sum_{i\in A\mysetminus S}v_i(S)$ that we aim to maximize is a non-monotone submodular function.
Besides the theoretical guarantees for influence-and-exploit marketing in the 
Bayesian setting \citep{HartlineMS08}, there are strong experimental evidence of 
its  performance in practice \citep{BabaeiMJS13}.
The problem generalizes naturally to different stochastic versions. We assume 
that the valuation function of each buyer $i$ is of the form $f_i(x)=a_i \sqrt{x}$ 
where $a_i$ is drawn independently from a Pareto Type II distribution with $\lambda=1$, $\alpha=2$.  We only learn the exact value of a buyer when we give the good for free to someone in her neighborhood. 

We evaluate \textsc{AdaptiveGreedy} on an instance based on the 
\href{https://snap.stanford.edu/data/com-Youtube.html}{YouTube graph} 
\citep{yang2015defining}, containing 1,134,890 vertices. The (known) weights are 
drawn independently from $U(0,1),$ and the costs are proportional to the sum of 
the weights of the incident edges. As \textsc{AdaptiveGreedy} is the first 
adaptive algorithm for the problem, we compare with non-adaptive alternatives 
like  \emph{Greedy}\footnote{The simple greedy algorithm that in each step 
picks the element with the largest marginal value.} and \emph{Density 
Greedy}\footnote{The greedy part of Wolsey's algorithm \citep{Wolsey82}.} for 
different values of the budget.
\textsc{AdaptiveGreedy} outperforms  the alternatives by up to $20\%$ (\cref{fig:sfig3}).

We observe similar improvements for Erd\H{o}s-R\'{e}nyi random graphs of 
different sizes and edge probability $5/\sqrt{n}$ and a fixed budget of $10\%$ of the total cost (\cref{fig:sfig6}).

\paragraph{Maximum Weighted Cut:}
Beyond the above applications, we would like to compare 
\textsc{SampleGreedy} to FANTOM with respect to both their performance and 
the number of value oracle calls as $n$ grows. 
We turn to \emph{weighted cut functions}---one of the most prominent 
subclasses of non-monotone submodular functions---on dense Erdős–Rényi 
random graphs with edge probability $0.2$. The  weights and the costs are 
drawn independently and uniformly from $[0,1]$ and the budget is fixed to 
$15\%$ of the total cost. Again \textsc{SampleGreedy} consistently performs 
better than FANTOM, albeit by $5$--$15\%$ (\cref{fig:sfig2}). In terms of 
running time, there is a large difference in favor of \textsc{SampleGreedy} (even 
for multiple runs), while the superlinear increase for FANTOM is evident (
\cref{fig:sfig5}).

\section{Conclusions}
The proposed random greedy method yields a considerable improvement over 
state-of-the-art algorithms, especially, but not exclusively, regarding the 
handling of huge instances. With all the subtleties of our work affecting solely 
our analysis, the algorithm remains strikingly simple and we are confident this 
will also contribute to its use in practice.
Simultaneously, this very simplicity translates into a generality that can be 
employed to achieve comparably good results for a variety of settings; we  
demonstrated this in the case of the adaptive submodularity setting.

Specifically, we expect that our approach can be directly utilised to improve the 
performance and running time of algorithms that now use some variant of the 
algorithm of \citet{GuptaRST10}. Such examples include the distributed 
algorithm of \citet{BarbosaENW15} and the streaming algorithm of 
\citet{MirzasoleimanJK18} in the case of a knapsack constraint. 
We further suspect that the same algorithmic principle can be  applied in the 
presence of incentives. This would largely improve the current state of the art in 
budget-feasible mechanism design for non-monotone objectives 
\citep{ChenGL11,AmanatidisKS19}.

Finally, a major question here is whether the same high level approach is valid 
even in the presence of additional combinatorial constraints. In particular, is it 
possible to achieve similar guarantees as FANTOM for a $p$-system and 
multiple knapsack constraints using only $O(n\log n)$ value queries?

\pagebreak
\bibliography{submodular}
\pagebreak

\appendix

\section{Additional Details on \cref{sec:experiments}}
\label{app:experiments}
 All graphs contain error bars, indicating the standard deviation between 
different runs of the experiments. This is usually insignificant due to the 
concentrating effect of the large 
size of the instances, despite the randomly 
initialized weights and inherent randomness of the algorithms used. 
Nevertheless, all results are obtained by running each experiment a number of 
times. For all algorithms involved, we use  \emph{lazy evaluations}  with 
$\varepsilon = 0.01$. 

\paragraph{Video Recommendation:}
We expand on the exact definition of the similarity measure that is only tersely 
described in the main text. Each movie $i$ is associated with a tag vector 
$t^i \in [0,1]^{1128}$, where each coordinate contains a relevance score for that 
individual tag. These tag vectors are \emph{not} normalized and have no 
additional structure, other than each coordinate being restricted to $[0,1]$. We 
define the similarity $w_{ij}$ between two movies $i, j$ as:
\begin{equation*}
w_{ij} = \sqrt{\sum_{k=1}^{1128} \left(\min \{t^i_k, t^j_k \}\right)^2}.
\end{equation*}
In other words, it is the L2 norm of the \emph{coordinate-wise minimum} of $t^i$ 
and $t^j$. This metric was chosen so that if \emph{both} movies have a high 
value in some tag, this counts as a much stronger similarity than one having a 
high value and the other a low one. For example, if we consider an inner product 
metric, any movie with all tags set to 1 would be as similar as possible to all 
other movies, even though it would include many tags that would be missing 
from the others. In particular, any movie would be appear more similar to the all 
1 movie than to itself! Choosing the minimum of both tags avoids this issue. 
Another possibility would be to normalize each tag vector before taking the inner 
product, to obtain the \emph{cosine similarity}. Although this alleviates some of 
the issues, there is some information loss as one movie could meaningfully have 
higher scores in all tags than another one; tags are not mutually exclusive. 
Ultimately any sensible metric has advantages and disadvantages and the exact 
choice has little bearing on our results. The similarity scores are then divided by 
their maximum as a final normalization step.

The experiment was repeated 5 times. The budget is represented as a fraction of 
the total cost starting at 1/100 and geometrically increasing to 1/10 in 10 steps. 
The total computation time was around 3 hours.

\paragraph{Influence-and-Exploit Marketing:}
For the YouTube graph, the experiment was repeated 5 times for a budget 
starting 
at 1/100 of the total cost and geometrically increasing to 1/3 in 20 steps, leading 
to a total 
computation time of 7 hours. For the Erdős–Rényi 
graph with $n$ vertices and edge probability $5 / \sqrt{n}$ it was
repeated 10 times, for $n$ starting at 50 and geometrically increasing to 2500 in 
20 steps, taking approximately 10 minutes.

\paragraph{Maximum Weighted Cut:}
The experiment was repeated 10 times for $n$ starting at 10 and increasing 
geometrically to 300 in 20 steps, requiring approximately 5 minutes.

\end{document}